\newtheorem{proposition}{Proposition}[section]
\newtheorem{remark}{Remark}[section]
\newcommand\dd{\mathrm{d}}
\newcommand\pp{\partial}
\newcommand\tr{\mathrm{tr}}
\newcommand\x{\bm{x}}
\newcommand\uvec{\bm{u}}
\newcommand\X{\mathbf{X}}
\newcommand\y{\bm{y}}
\newcommand\qvec{\bm{q}}
\newcommand\Qvec{\bm{Q}}
\newcommand\F{ \mathsf{F}}
\begin{document}

\title{A two species micro-macro model of wormlike micellar solutions and its maximum entropy closure approximations: An energetic variational approach}

\author[iit]{Yiwei Wang}
\ead{ywang487@iit.edu}

\author[cug]{Teng-Fei Zhang}
\ead{zhangtf@cug.edu.cn}

\author[iit]{Chun Liu}
\ead{cliu124@iit.edu}

\address[iit]{Department of Applied Mathematics, Illinois Institute of Technology, Chicago, IL 60616, USA}
\address[cug]{School of Mathematics and Physics, China University of Geosciences, Wuhan,430074, China}

\date{}


\begin{abstract}
  Wormlike micelles are self-assemblies of polymer chains that can break and recombine reversibly.
    In this paper, we derive a thermodynamically consistent two-species micro-macro model of wormlike micellar solutions by employing an energetic variational approach. The model 
    incorporates a breakage and combination process of polymer chains into the classical micro-macro dumbbell model of polymeric fluids in a unified variational framework. We also study different maximum entropy closure approximations to the new model by ``variation-then-closure'' and ``closure-then-variation'' approaches. 
  By imposing a proper dissipation in the coarse-grained level, the closure model, obtained by ``closure-then-variation'',  preserves the thermodynamical structure of both mechanical and chemical parts of the original system.  
  Several numerical examples show that the closure model can capture the key rheological features of wormlike micellar solutions in shear flows.
\end{abstract}

\maketitle


\section{Introduction}
Wormlike micelles, also known as ``living polymers'', are long, cylindrical aggregates of self-assembled surfactants that can break and recombine reversibly \cite{cates1987reptation}. 
There are substantial interests in studying wormlike micellar solutions for the purpose of fundamental research and industrial applications \cite{cates1990statics, cates2006rheology, jayaraman2003oscillations, smolka2003drop, yang2002viscoelastic}.
In particular, it has been observed that many wormlike micellar solutions exhibit shear banding, where the material splits into layers with different viscosities when undergoing strong shearing deformation \cite{olmsted2008perspectives}. Theoretically, shear banding is thought to arise from a non-monotone rheological constitutive curve of the shear stress versus the applied shear rate for steady homogeneous flow \cite{germann2014investigation, mohammadigoushki2019transient}. Understanding this unusual rheological behavior of wormlike micelles has been a focus of many theoretical and experimental studies \cite{vasquez2007network, germann2013nonequilibrium}. 

During the last couple of decades, a number of mathematical models have been proposed for wormlike micellar solutions \cite{adams2011transient, cates1987reptation, cates1990flow, dutta2018mechanistic, grmela2010mesoscopic, olmsted2000johnson, vasquez2007network}. Many theoretical models, such as the Johnson-Segalman model \cite{olmsted2000johnson} and Rolie-Poly model \cite{adams2011transient}, are one-species models, which didn't reflect the ``living'' nature of wormlike micelles. 
To account for the reversible breaking and combination of micellar chains, Cates proposed a reptation-reaction model, in which the reaction kinetics is introduced to account for the reversible breaking and combination process \cite{cates1987reptation,cates1990nonlinear}.
Inspired by Cates' seminal work, a two-species, scission-combination model for wormlike micellar solutions is proposed in \cite{vasquez2007network}, known as the VCM (Vasquez-Cook-McKinley) model. 
Although the VCM model was derived from a highly simplified discrete version of Cates' model \cite{germann2013nonequilibrium}, it can capture the key rheological properties of wormlike micellar solutions \cite{vasquez2007network, pipe2010wormlike, zhou2012multiple}. 
As pointed out in \cite{germann2016validation}, the VCM model is thermodynamically inconsistent, due to the assumption that the break rate depends on the velocity gradient explicitly.
The VCM model was later revisited into a thermodynamically consistent form \cite{germann2013nonequilibrium, germann2014investigation}, which is known as the GCB (Germann-Cook-Beris) model, by using generalized bracket approach \cite{beris1994thermodynamics}. Under the framework of GENERIC \cite{grmela1997dynamics, ottinger1997dynamics}, Grmela et al. formulate a mesoscopic tube model that includes the scission-recombination process, the reptation, Brownian relaxation and the diffusion, for wormlike micellar solutions \cite{grmela2010mesoscopic}, in which the wormlike micelles were modeled as different length chains composed of Hookean dumbbells.
Same framework can be used to derive several reduced models, including a VCM-type two-species model and a three-species model.



For many complex fluids, two-scale macro-micro models, which couple the evolution of the microscopic probability distribution function of polymeric molecules, with the macroscopic flow,  have been widely used to describe their dynamics \cite{le2012micro, li2007mathematical, lin2007micro}. 
In these models, the micro-macro interaction is coupled through a transport of the microscopic Fokker-Planck equation and the induced elastic stress tensor in the macroscopic equation. 
The competition between the kinetic energy and the multiscale elastic energies leads to different interesting hydrodynamical the rheological properties.
The goal of this paper is to extend such a micro-macro approach to model wormlike micellar solutions by incorporating it with the microscopic breaking and combination reaction kinetics.
 Following the setting in the VCM model \cite{vasquez2007network}, we represent the wormlike micelles by two species of dumbbells of different molecular weights respectively. Instead of constructing some empirical constitutive equation, we employ an energetic variational approach to derive the governing equation from an prescribed energy-dissipation law. 

We also study different maximum entropy closure approximations to the new micro-macro model. We adopt both ``closure-then-variation'' and ``variation-then-closure'' approaches. The first approach, which has been widely used in literature \cite{hyon2008maximum, wang2008crucial}, applies the maximum entropy closure at the PDE level, while the later approach first reformulates an energy-dissipation law at a coarse grained level and derives the closure system by a variation procedure \cite{doi2016principle, klika2019dynamic}. Due to the presence of reaction kinetics, these two approaches are not equivalent. Although the ``closure-then-variation'' approach can obtain a model satisfying a energy-dissipation property, the model fails to produce a non-monotone rheological constitutive curve of the shear stress versus the applied shear rate in steady homogeneous flows. In contrast, by formulating the dissipation part in the coarse-grained level properly, the ``closure-then-variation'' approach can result in a model that preserves the thermodynamical structures of the mechanical and chemical parts of the original system.
The resulting closure system takes the same form as the VCM \cite{vasquez2007network} and GCB models \cite{germann2013nonequilibrium}.
Numerical simulations show that the moment closure model can capture the key rheological features of wormlike micellar solutions as the VCM \cite{vasquez2007network} and GCB \cite{germann2013nonequilibrium} models.

The paper is organized as follows. In section 2, we formally derive the micro-macro model for wormlike micellar solutions by employing an energetic variational approach. A detailed investigation of maximum entropy closure approximations to the micro-macro model is presented in section 3. In section 4, we show the closure model, obtained by ``closure-then-variation'' can capture the key rheological features of wormlike micellar solutions in planar shear flows.


\section{Energetic variational formation of the new micro-macro model}

In this section, we employ an energetic variational approach to derive a thermodynamically consistent two-species micro-macro model for wormlike micellar solutions. 

\subsection{Energetic variational approach}

Originated from  pioneering works of Rayleigh \cite{strutt1871some} and Onsager \cite{onsager1931reciprocal, onsager1931reciprocal2}, 
the energetic variational approach (EnVarA) provides a general framework to 
derive the dynamics of a nonequilibrium thermodynamic system from a prescribed energy-dissipation law through two distinct variational processes: the Least Action Principle (LAP) and the Maximum Dissipation Principle (MDP) \cite{Giga2017,liu2009introduction}. 
The energy-dissipation law, which comes from the first and second laws of thermodynamics \cite{ericksen1992introduction, Giga2017}, can be formulated as
\begin{equation}
\frac{\dd}{\dd t} E^{\text{total}}(t) = - \triangle,
\end{equation}
for an isothermal closed system.
Here $E^{\text{total}}$ is the total energy, which is the sum of the Helmholtz free energy $\mathcal{F}$ and the kinetic energy $\mathcal{K}$; $\triangle$ is the rate of energy dissipation, which is equal to the entropy production in this case.
The LAP states that the dynamics of a Hamiltonian system is determined as a critical point of the action functional $\mathcal{A}(\x) = \int_0^T (\mathcal{K} - \mathcal{F}) \dd t$ with respect to $\x$ (the trajectory in Lagrangian coordinates for mechanical systems) \cite{arnol2013mathematical, Giga2017}, i.e., 
\begin{equation}
  \delta \mathcal{A} =  \int_{0}^T \int_{\Omega(t)} (f_{\text{inertial}} - f_{\text{conv}})\cdot \delta \x~  \dd \x \dd t.
\end{equation}
In the meantime, for a dissipative system, the dissipative force can be determined by minimizing the dissipation functional $\mathcal{D} = \frac{1}{2} \triangle$ with respect to the ``rate'' $\x_t$ in the linear response regime \cite{de2013non}, i.e.,
\begin{equation}
\delta \mathcal{D} = \int_{\Omega(t)} f_{\text{diss}} \cdot \delta \x_t~ \dd \x.
\end{equation}
This principle is known as Onsager's MDP \cite{onsager1931reciprocal, onsager1931reciprocal2}.
Thus, according to force balance (Newton's second law, in which the inertial force plays role of $ma$), we have
\begin{equation}\label{FB}
\frac{\delta A}{\delta \x} = \frac{\delta \mathcal{D}}{\delta \x_t}
\end{equation}
in Eulerian coordinates, which is the dynamics of the system. 
In the framework of EnVarA,  the dynamics of the system is totally determined by the energy-dissipation law and the kinematic relation, which shifts the main task of modeling complex nonequilibrium systems to the construction of energy-dissipation laws. 
The EnVarA framework has been proved to be a powerful tool to build up thermodynamically consistent mathematical models for many complicated system, especially those in complex fluids \cite{liu2009introduction, Giga2017}. 

Complex fluids are fluids with complicated rheological phenomena, arising from the interaction between the microscopic elastic properties and the macroscopic flow motions \cite{lin2007micro, liu2009introduction}.
A central problem in modeling complex fluids is to construct a constitutive relation, which links the stress tensor ${\bm \tau}$ and the velocity field $\nabla \uvec$ \cite{le2009multiscale}. 
Unlike a Newtonian fluid, there is no simple linear relation ${\bm \tau} = \mu \dot{\bm \gamma},$ where $\dot{\bm \gamma} = \frac{1}{2}( \nabla \uvec + \nabla \uvec^{\rm T})$ is the strain rate and $\mu$ is the viscosity, for complex fluids.
Instead of constructing an empirical constitutive equation that often takes the form of 
\begin{equation}
  \pp_t {\bm \tau} + (\uvec \cdot \nabla) {\bm \tau} = \bm{f}(\bm{\tau}, \nabla \uvec),
  \end{equation}
the EnVarA framework derives the constitutive relation from the giving energy-dissipation law through the variation procedure. Hence, the multiscale coupling and competition among multiphysics can be dealt with systematically.


As an illustration, we first give a formal derivation of a one-species incompressible micro-macro model of a dilute polymeric fluid by employing the EnVarA. A more detailed description to this model can be found in \cite{lin2007micro, Giga2017}. 
In this model, it is assumed that the polymeric fluid consists of beads joined by springs, and a molecular configuration is described by an end-to-end vector $\qvec \in \mathbb{R}^d$  \cite{bird1987dynamics, doi1988theory}.  At the microscopic level, 
the system is described
by a Fokker-Planck equation of the number distribution function $\psi(\x, \qvec, t)$ with a drift term depending on the macroscopic velocity $\uvec$. While the macroscopic motion of the fluid is described by a Navier-Stoke equation with an elastic stress depending on the $\psi(\x, \qvec, t)$.

To derive the dynamics of the system by the EnVarA, we need to introduce Lagrangian descriptions in both microscopic and macroscopic scales.
In the macroscopic domain $\Omega$, we define the flow map $\x(\X, t): \Omega \rightarrow \Omega$, where $\X$ are Lagrangian coordinates and $\x$ are Eulerian coordinates.  For fixed $\X$, $\x(\X, t)$ is the trajectory of a particle labeled by $\X$, while for fixed $t$, $\x(\X, t)$ is an orientation-preserving diffeomorphism between the initial domain to the current domain.
For a given flow map $\x(\X, t)$, we can define the associated velocity 
\begin{equation}
\uvec(\x(\X, t), t) = \frac{\dd}{\dd t} \x(\X, t),
\end{equation}
and the deformation tensor 
\begin{equation}
\widetilde{\F}(\x(\X, t), t) = \F(\X, t) = \nabla_{\X} \x(\X, t).
\end{equation}
Without ambiguity, we will not distinguish $\F$ and $\widetilde{\F}$ in the following. It is easy to verify that $\F(\x, t)$ satisfies the transport equation \cite{lin2007micro}
\begin{equation}\label{transport_F}
\F_t  + \uvec \cdot \nabla \F = \nabla \uvec \F,
\end{equation}
in Eulerian coordinates, where $\uvec \cdot \nabla \F$ stands for $u_k \pp_k F_{ij}$.
The deformation tensor $\F$ carries all the kinematic information of the microstructures, patterns, and configurations in complex fluids \cite{lin2012some}.
Similar to the macroscopic flow map $\x(\X, t)$, we can also introduce the microscopic flow map $\bm{q}(\X, \Qvec, t)$, where $\bm{X}$ and $\Qvec$ are Lagrangian coordinates in physical and configuration spaces respectively. The corresponding microscopic velocity ${\bf V}$ is defined as
\begin{equation} 
{\bf V}(\x(\X, t), \bm{q}(\X, \Qvec, t), t) = \frac{\dd}{\dd t}\bm{q}(\X, \Qvec, t).
\end{equation}
Due to the conservation of mass, the number density distribution function $\psi(\x, \qvec, t)$ satisfies the following kinematics
\begin{equation}\label{Kinemtic}
\pp_t \psi + \nabla_{\x} \cdot (\uvec \psi) + \nabla_{\qvec} \cdot ({\bm V} \psi) = 0,
\end{equation}
where $\uvec$ and ${\bm V}$ are effective velocities 
in the macroscopic domain and the microscopic configuration space respectively.
After the specification of the kinematics (\ref{Kinemtic}), the micro-macro system can be modeled through the energy-dissipation law
\begin{equation}\label{ED_MM}
  \begin{aligned}
   &  \frac{\dd}{\dd t} \int_{\Omega} \left[\frac{1}{2} \rho |\uvec|^2 + \lambda \int \psi (\ln \psi - 1) + \psi U \dd \qvec \right] \dd \x = - \int_{\Omega} \left[  \eta |\nabla \uvec|^2 + \int_{\mathbb{R}^d} \frac{\lambda}{\xi} \psi |{\bm V} - \nabla \uvec \qvec|^2 \dd \qvec \right] \dd \x, \\
  \end{aligned}
\end{equation}
where 
$\mathcal{K} = \int_{\Omega} \frac{1}{2} \rho |\uvec|^2 \dd \x$ is the kinetic energy, $\lambda$ is a constant that represents the ratio between the kinetic energy and the elastic energy, $U(\qvec)$ is the spring potential energy, and $\xi$ is the constant that is related to the polymer relaxation time. We assume that $\uvec$ satisfies the incompressible condition $\nabla \cdot \uvec = 0$. The second-term in the dissipation accounts for the relative friction of microscopic particle to the macroscopic flow, where $(\nabla \uvec) \qvec$ is velocity induced by the macroscopic flow due to the Cauchy-Born rule \cite{lin2007micro}. 
The Cauchy-Born rule states that the movement in configuration space follows the flow on the macroscopic level, i.e., $\qvec = {\sf F} \Qvec$ without the microscopic evolution, where $\Qvec$ are Lagrangian coordinates in the configuration space. A direct computation shows that
\begin{equation}
\widetilde{\bm V} = \frac{\dd}{\dd t} (\F \Qvec) = \nabla \uvec \F \Qvec = \nabla \uvec \qvec,
\end{equation}
which is the microscopic velocity induced by the macroscopic flow.

Now we are ready to perform the energetic variational approach in both microscopic and macroscopic scales. It is import to keep the ``separation of scales'' in mind when applying the LAP and the MDP in both scales. 
On the microscopic scale, 
since $\x(\X, t)$ is treated being independent from $\qvec(\Qvec, \X, t)$,
a standard energetic variational approach results in 
\begin{equation}\label{eq_V}
\frac{1}{\xi} \psi  ({\bm V} - \nabla \uvec \qvec) =  - \psi \nabla_{\qvec} (\ln \psi + U(\qvec)),
\end{equation}
where the right-hand side is obtained by the LAP, taking the variation of $-\mathcal{F}_{\qvec} =  - \int \psi \ln \psi + \psi U \dd \qvec $ with respect to ${\bm q}$, and the left-hand side is obtained by the MDP, taking the variation of $\mathcal{D}_{\qvec} = \frac{1}{2 \xi} \int \psi |{\bm V} - \nabla \uvec \qvec|^2 \dd \qvec$ with respect to ${\bm V}$ \cite{Giga2017, liu2020lagrangian}. 
Combining (\ref{eq_V}) with the kinematics (\ref{Kinemtic}), we have 
\begin{equation}
\psi_t + \nabla \cdot (\uvec \psi) + \nabla_{\qvec} \cdot (\nabla \uvec \qvec \psi) = \xi \left(\Delta_{\qvec} \psi + \nabla_{\qvec} \cdot (\psi \nabla_{\qvec} U) \right).
\end{equation}
On the macroscopic scale, 
due to the ``separation of scales'', we treat ${\bm q}(\X, \Qvec, t)$ as being independent from $\x(\X, t)$. The Cauchy-Born rule is taken into account by the dissipation term $|{\bm V} - (\nabla \uvec) \qvec|^2$. The action functional is defined by
\begin{equation}
  \begin{aligned}
\mathcal{A}(x) & = \int_{0}^T \int_{\Omega} \left[ \frac{1}{2} \rho |\uvec|^2 - \lambda \int_{\mathbb{R}^3} (\psi (\ln \psi - 1) + U \psi) \dd \qvec \right] \dd \x \dd t, \\
  \end{aligned}
\end{equation}
By the LAP, i.e., taking variation of $\mathcal{A}(\x)$ with respect to $\x$, we obtain
\begin{equation}
\frac{\delta\mathcal{A}}{\delta \x}= -\rho \x_{tt} = - \rho(\uvec_t + \uvec \cdot \nabla \uvec).
\end{equation}
Meanwhile, for the dissipation part, the MDP results in
\begin{equation}
\begin{aligned}
  \frac{\mathcal{\delta D}}{\delta \x_t} & =  - \eta \Delta \uvec + \frac{\lambda}{\xi} \nabla \cdot \int \psi (V - \nabla \uvec \qvec) \otimes \qvec \dd \qvec.  \\
      \end{aligned}
\end{equation}
Notice
\begin{equation}
  \begin{aligned}
    \frac{\lambda}{\xi} \nabla \cdot  \int \psi ({\bm V} - \nabla \uvec \qvec) \otimes \qvec \dd \qvec  & = \lambda \nabla \cdot \int \left( - \nabla_{\qvec} \psi \otimes \qvec  - \nabla_{\qvec} U \otimes \qvec \psi \right) \dd \qvec \\
& = - \lambda \nabla \cdot \left( \int \nabla_{\qvec} U \otimes \qvec \psi  \dd \qvec -  n  {\bf I} \right), \\
  \end{aligned}
\end{equation}
where the first equality is obtained by using (\ref{eq_V}).
Thus, the force balance condition leads to the macroscopic momentum equation:
\begin{equation}
\rho(\uvec_t + \uvec \cdot \nabla \uvec) + \nabla p= \eta \Delta \uvec +\nabla \cdot {\bm \tau},
\end{equation}
where $p$ is the Lagrangian multiplier for the incompressible condition $\nabla \cdot \uvec$, and ${\bm \tau}$ is the induced elastic stress tensor given by
\begin{equation}\label{tau_1}
{\bm \tau} = \lambda \left( \int_{\mathbb{R}^3} \psi \nabla_{\qvec} U \otimes \qvec \dd \qvec - n {\bf I}  \right).
\end{equation}
The form of the induced elastic stress tensor ${\bm \tau}$ is exactly the Kramers' expression of the polymeric stress \cite{li2007mathematical}, which reflects the microscopic contribution to the macroscopic flow. 

\begin{remark}
  In the above derivation, the induced elastic stress tensor is derived from the dissipation part of the energy-dissipation law. Alternatively, one can derive the equivalent induced elastic stress tensor from the conservative part \cite{lin2007micro, hyon2010energetic}. 
Due to the Cauchy-Born rule, we can assume that the configuration space follows the flow in the macroscopic scale, i.e., $\qvec = F \Qvec$ and ${\bf V} = \nabla \uvec \qvec$. Thus, the macroscopic action functional can be defined by
\begin{equation}
  \begin{aligned}
    \mathcal{A}(\x) = \int_{0}^T \int_{\Omega_0} \frac{1}{2} \rho_0 |\x_t|^2 - \lambda \int \psi_0  \left(\ln \psi_0 - 1 \right) + U(F \Qvec) \psi_0 \dd \Qvec \dd \X 
  \end{aligned}
\end{equation}
and the macroscopic dissipation is simply $\mathcal{D} = \frac{1}{2} \int \eta |\nabla \uvec|^2 \dd \x$ (the second term in the dissipation vanishes since the Cauchy-Born rule is used).
By taking variation of $\mathcal{A}(\x)$ with respect to $\x$, we have \cite{lin2007micro}
\begin{equation}
\frac{\delta \mathcal{A}}{\delta \x} =  - \rho \x_{tt} + \lambda \nabla \cdot (\int \psi \nabla_{\qvec} U \otimes \qvec \dd \qvec ).                                              
\end{equation}
Meanwhile, $\frac{\delta \mathcal{D}}{\delta \x_t} = - \eta \Delta \uvec.$
Hence, we end up with the same macroscopic equation with ${\bm \tau}$ given by
\begin{equation}
  {\bm \tau} = \lambda \int_{\mathbb{R}^3} \psi \nabla_{\qvec} U \otimes \qvec \dd \qvec,
\end{equation}
which is equivalent to the (\ref{tau_1}) in the incompressible case since $\nabla \cdot (-  n {\bf I})$ will contribute to the pressure and can be dropped \cite{li2007mathematical}.
\end{remark}

The classic energetic variational approach, as well as other variational principle \cite{beris2001bracket, doi2011onsager, grmela2014contact, grmela1997dynamics, ottinger1997dynamics}, which is indeed based on classical mechanics, cannot be applied to systems involving chemical reactions directly.
Since 1950’s, a large amount of works tried to developed a Onsager type variational theory for reaction kinetics by building analogies between Newtonian mechanics
and chemical reactions \cite{bataille1978nonequilibrium, beris1994thermodynamics, biot1977variational,grmela1993thermodynamics, grmela2012fluctuations, grmela2021multiscale, mielke2011gradient, oster1974chemical}. For instance,
a dissipation potential formulation of chemical reactions was introduced in \cite{grmela1993thermodynamics}. The formulation was extended to general mass-action kinetics involving inertia and fluctuations under the GENERIC framework in \cite{grmela2012fluctuations, pavelka2018multiscale, grmela2021multiscale}.
Motivated by these pioneering work,  the energetic variational formulation to chemical reactions was developed in a recent work \cite{wang2020field} by using the reaction trajectory ${\bm R}$ as the state variable.
The reaction trajectory, also known as the extent of reaction or degree of advancement, was originally introduced by De Donder \cite{de1927affinite, de1936thermodynamic}.
For a general reversible chemical reaction system containing $N$ species $\{ X_1, X_2, \ldots X_N \}$ and $M$ reactions, represented by
\begin{equation}
\ce{ $\alpha_{1}^{l} X_1 + \alpha_{2}^{l}X_2 + \ldots \alpha_{N}^{l} X_N$ <=> $\beta_{1}^{l} X_1 + \beta_{2}^{l}X_2 + \ldots \beta_{N}^{l} X_N$}, \quad l = 1, \ldots, M.
\end{equation}
We can define a reaction trajectory ${\bm R} \in \mathbb{R}^M$, where each component $R_l$ accounts for the “number” of $l$-th chemical reactions that has occurred in the forward direction by time $t$. The relation between species concentration ${\bm c} \in \mathbb{R}_{+}^N$ and the reaction trajectory ${\bm R}$ is given by
\begin{equation}\label{kinematics_R}
{\bm c} = {\bm c}_0 + {\bm \sigma} {\bm R},
\end{equation}
where ${\bm c}_0$ is the initial concentration, and $\bm{\sigma} \in \mathbb{R}^{N \times M}$ is the stoichiometric matrix with $\sigma_{il} = \beta^l_i - \alpha^l_i$. One can view (\ref{kinematics_R}) as the kinematics of the chemical reaction system \cite{liu2020structure}. With the kinematics (\ref{kinematics_R}), one can reformulate the free energy $\mathcal{F}$, which is a functional of ${\bm c}$, in terms of the reaction trajectory ${\bm R}$ \cite{wang2020field, liu2020structure}, Moreover, notice that
\begin{equation}
\frac{\delta \mathcal{F}}{\delta R_l} =  \sum_{i = 1}^N \sigma_{il}  \frac{\delta \mathcal{F}}{\delta c_i} = \sum_{i = 1}^N \sigma_{il} \mu_i,
\end{equation}
which is exactly the affinity of $l-$the chemical reaction, as defined by De Donder \cite{de1936thermodynamic}. It is worth pointing out that ${\bm R}$ corresponds to the internal state variable defined in \cite{coleman1967thermodynamics}.

The affinity plays a role of the ``force'' that drives the chemical reaction, which vanishes at the chemical equilibrium \cite{kondepudi2014modern}. The reaction trajectory ${\bf R}$ is the conjugate variable of the chemical affinity, which is analogous to the flow map $\x(\X, t)$ in mechanical systems \cite{oster1974chemical}. The reaction rate ${\bm r}$ is defined as $\pp_t {\bm R}$, which can be viewed as the reaction velocity \cite{kondepudi2014modern}. 
Similar to a mechanical system, the reaction rate can be obtained from a prescribed energy-dissipation law in terms of ${\bm R}$ and $\pp_t {\bm R}$:
\begin{equation}
  \frac{\dd}{\dd t} \mathcal{F}[{\bm R}] = - \mathcal{D}_{\rm chem}[{\bm R}, \pp_t {\bm R}],
  \end{equation}
  where $\mathcal{D}_{\rm chem}[{\bm R}, \pp_t {\bm R}]$ is the rate of energy dissipation due to  the chemical reaction procedure. 
  Since the linear response assumption for chemical system may be not valid unless at the last stage of chemical reactions \cite{beris1994thermodynamics, de2013non}, $\mathcal{D}_{\rm chem}$ is not quadratic in terms of $\pp_t {\bm R}$ in general.
  For a general nonlinear dissipation $$\mathcal{D}_{\rm chem}[{\bm R}, \pp_t {\bm R}] = \left( {\bm \Gamma}({\bm R}, \pp_t {\bm R}), \pp_t {\bm R}  \right) = \sum_{l=1}^M \Gamma_l ({\bm R}, \pp_t {\bm R}) \pp_t R_l \geq 0,$$
  the reaction rate can be derived as \cite{wang2020field, liu2020structure}:
  \begin{equation}\label{eq_Rl}
  \Gamma_l({\bm R}, \pp_t {\bm R}) = - \frac{\delta \mathcal{F}}{\delta R_l},
  \end{equation}
  which is the ``force balance'' equation for the chemical part \cite{wang2020field, liu2020structure}. It is often assumed that $\Gamma_l({\bm R}, \pp_t {\bm R}) = \Gamma_l (R_l, \pp_t R_l)$. So equation (\ref{eq_Rl}) specify the reaction rate of $l-$the chemical reaction. 
  In this formulation, the choice of the free energy determines the chemical equilibrium, while the choice of the dissipation functional $\mathcal{D}_{\rm chem}[{\bm R}, \pp_t {\bm R}]$ determines the reaction rate.
  


%


\subsection{Micro-macro model for wormlike micellar solutions}
Now we are ready to derive a thermodynamically consistent two-species micro-macro model for wormlike micellar solutions.
Following the setting of the VCM model \cite{vasquez2007network}, we consider there exist only two species in the system. A molecule of species $A$ 
can break into two molecules of species $B$, and two molecules of species $B$ can reform species $A$.
At a microscopic level, molecules of both species are modeled as elastic dumbbells as in classical models of dilute polymeric fluids \cite{bird1987dynamics, doi1988theory}. We denote the number density distribution of finding each molecule with end-to-end vector $\bm{q}$ at position $\bm{x}$ by $\psi_A(\bm{x}, \bm{q}, t)$ and $\psi_B(\bm{x},  \bm{q}, t)$ respectively.
The number density of species $\alpha$ is defined by
\begin{equation}
n_{\alpha} (\x, t) = \int \psi_{\alpha} \dd \bm{q}.
\end{equation}
We should emphasize that this is a coarse-grained description and the end-to-end vector $\qvec$ has no information on the length of polymer chains.

In general, the breakage and combination processes can be regarded as chemical reactions 
\begin{equation}\label{reaction}
\qvec + \qvec' \ce{<=>} \qvec'',
\end{equation}
where $\qvec$ and $\qvec'$ are end-to-end vectors of species $B$ and $\qvec''$ is an end-to-end of species $A$ [see Fig. \ref{Fig0}(a) for illustration]. We denote the forward and backward reaction rate of (\ref{reaction}) by $W^+(\qvec, \qvec'; \qvec'')$ and $W^-(\qvec, \qvec'; \qvec'')$ respectively. The kinematics of $\psi_A$ and $\psi_B$ can be written as
\begin{equation}
  \begin{cases}
    & \pp_t \psi_A + \nabla \cdot (\uvec_A \psi_A) + \nabla_{\qvec} \cdot ({\bm V}_A \psi_A) = \int R_t(\qvec', \qvec''; \qvec ) \dd \qvec' \dd \qvec'' \\
    & \pp_t \psi_B +  \nabla  \cdot (\uvec_B  \psi_B) + \nabla_{\qvec} \cdot ({\bm V}_B \psi_B) = - \int R_t(\qvec, \qvec'; \qvec'' ) \dd \qvec' \dd \qvec'' - \int R_t(\qvec', \qvec; \qvec'') \dd \qvec' \dd \qvec'' \\
    & R_t(\qvec, \qvec'; \qvec'') = W^{+}(\qvec, \qvec'; \qvec'') \psi_B(\qvec)\psi_B(\qvec')  - W^{-}(\qvec, \qvec'; \qvec'') \psi_A (\qvec''),  \\
  \end{cases}
  \end{equation}
where $\uvec_{\alpha}$ and ${\bm V}_{\alpha}$ are effective macroscopic and microscopic velocities.
Different models can be obtained by choosing $W^+(\qvec, \qvec'; \qvec'')$ and $W^-(\qvec, \qvec'; \qvec'')$ differently.
In this paper, we 
take
\begin{equation}\label{Assumption_reaction}
  W^{\pm}(\qvec, \qvec'; \qvec'') \neq 0 \quad \text{if and only if} \quad \qvec = \qvec' = \qvec'',
\end{equation}
which corresponds to the case that an $A$ molecule at position $\x$ with end-to-end vector $\qvec$ can only break into two $B$ molecules with same end-to-end vector, and the combination process can only happen between two $B$ molecules at the same position $\x$ with the same end-to-end vector. This is a special case of a reversible microscopic reaction mechanism $\qvec + \qvec \ce{<=>} \alpha \qvec$, illustrated in Fig. \ref{Fig0}(b), with $\alpha = 1$. $\alpha$ can be viewed as a parameter for fast conformational changes of species $A$.
Within this assumption, one can have a {\emph detailed balance} condition for each $\x$ and $\qvec$ and the kinematics can reduce to 
\begin{equation}\label{kinematic}
  \begin{cases}
    & \pp_t \psi_A + \nabla \cdot (\uvec_A \psi_A) + \nabla_{\qvec} \cdot (V_A \psi_A) = - R_t\\  
    & \pp_t \psi_B +  \nabla  \cdot (\uvec_B  \psi_B) + \nabla_{\qvec} \cdot (V_B \psi_B) = 2 R_t, \\
  \end{cases}
\end{equation}
where $R(\x, \qvec, t)$ is the reaction trajectory for the breakage and combination for given $\qvec$ and $\x$. We should emphasize that the assumption here is only for the mathematical simplicity, which may not fully reflect the complicated physical scenario. 


\begin{figure}[!h]
  \centering
  \begin{overpic}[width = 0.45 \textwidth]{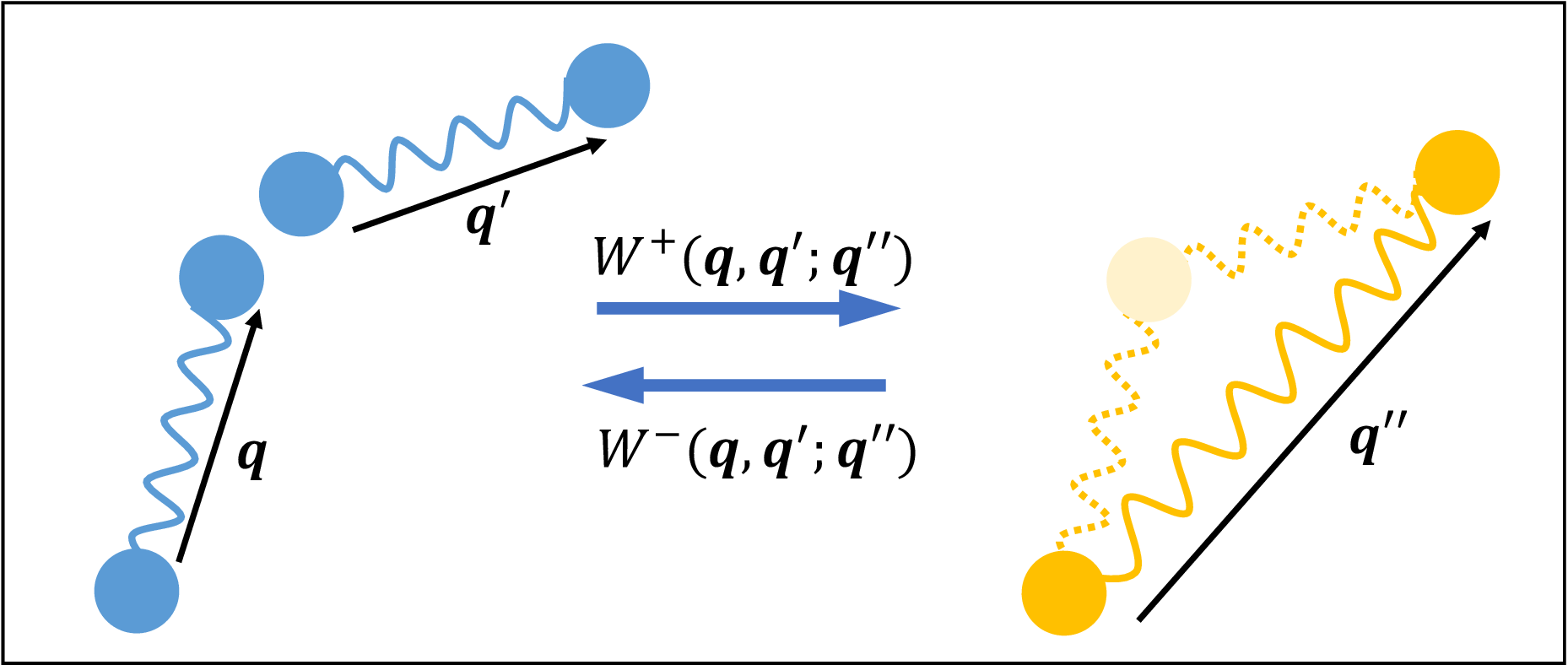}
  \put(0, 37){ {(a)} }
  \end{overpic}
  \hspace{1em}
  \begin{overpic}[width = 0.45 \textwidth]{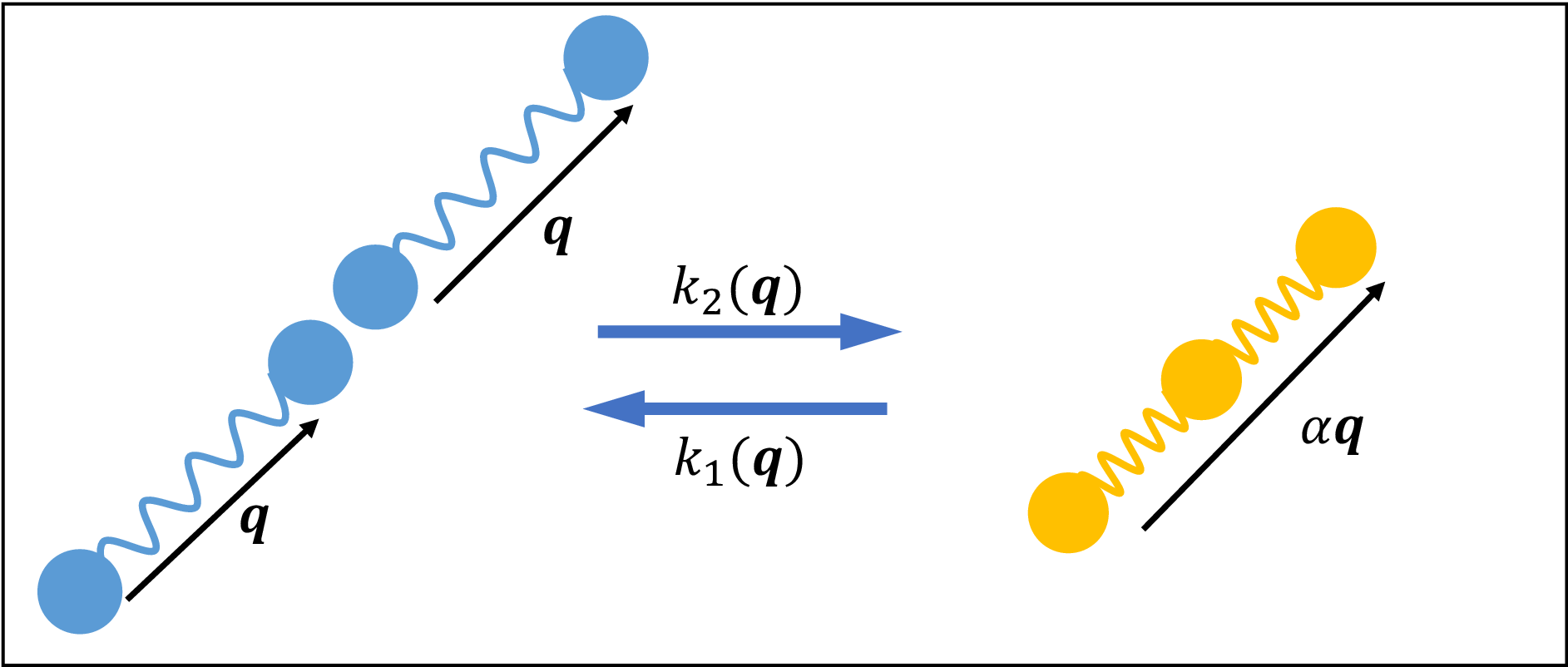}
    \put(0, 37){ {(b)} }
  \end{overpic}
  \caption{Schematic diagram of breakage and combination processes in wormlike micellar solutions, in which different species are indicated by different colors. The reaction mechanism considered in this paper is (b) with $\alpha = 1$.}\label{Fig0}
\end{figure}

  \begin{remark}
  In the original VCM model \cite{vasquez2007network}, the authors assume that 
    \begin{equation}\label{kinematic_VCM}
      \begin{cases}
        & \pp_t \psi_A + \nabla \cdot (\uvec_A \psi_A) + \nabla_{\qvec} \cdot ({\bm V}_A \psi_A) = -  k_1 \psi_A + k_2 \psi_B * \psi_B\\  
        & \pp_t \psi_B +  \nabla  \cdot (\uvec_B  \psi_B) + \nabla_{\qvec} \cdot ({\bm V}_B \psi_B) =  2 k_1 \psi_A - 2 k_2 \psi_B * \psi_B, \\
      \end{cases}
    \end{equation}
    where
  \begin{equation}
  \psi_B * \psi_B  = \int \psi_B(\x, \qvec - \tilde{\qvec}, t) \psi_B(\x, \tilde{\qvec}, t) \dd \tilde{q}
  \end{equation}
  The advantage of the assumption (\ref{kinematic_VCM}) is that the system will satisfies the law of mass action for $n_A$ and $n_B$ in the macroscopic scale, that is
    \begin{equation}
      \begin{cases}
      & \pp_t n_A + \nabla \cdot (n_A \uvec_A) = - k_1 n_A + k_2 n_B^2 \\
      & \pp_t n_B + \nabla \cdot (n_B \uvec_B) = 2 k_1 n_A - 2 k_2 n_B^2. \\
      \end{cases}
    \end{equation}
  by integrating both sides of (\ref{kinematic_VCM}) with respect to $\qvec$. However, as pointed out in \cite{adams2018nonlinear}, the reaction mechanism in VCM model is not microscopically reversible, as a $A$ molecule can only break in the middle to give two equal-length $B$ molecule and two $B$ macromolecules can combine through adding the end-to-end vector. As a consequence, it seems to difficult to obtain a variational structure for the breakage and combination mechanism (\ref{kinematic_VCM}). To repair the thermodynamic problem, in \cite{adams2018nonlinear}, the author proposed a microscopic reversible reaction mechanism with
  \begin{equation}
W^{\pm}(\qvec, \qvec'; \qvec'') \neq 0, ~ \text{if and only if} ~  \qvec'' = \qvec + \qvec'~\text{or}~\qvec'' = \qvec - \qvec',
  \end{equation}
  in their Brownian dynamics simulations.
  \end{remark}

  \begin{remark}
    The previous kinematic assumption for the breakage and combination process is based on a two-species approach. An alternative approach, which is a direct extension of Cates' original work, is to view all micelles as one species with different end-to-end vector $\qvec$. Then the reaction assumption \eqref{reaction} gives a kinematics
    \begin{equation}
      \begin{cases}
        & \pp_t \psi + \nabla \cdot (\uvec \psi) + \nabla_{\qvec} \cdot ({\bm V} \psi) \\
        & \quad  = - \int R_t(\qvec, \qvec'; \qvec'' ) \dd \qvec' \dd \qvec'' - \int R_t(\qvec', \qvec; \qvec'') \dd \qvec' \dd \qvec'' + \int R_t(\qvec', \qvec''; \qvec) \dd \qvec' \dd \qvec'', \\
        &  R_t(\qvec, \qvec'; \qvec'') = W^{+}(\qvec, \qvec'; \qvec'') \psi(\qvec)\psi(\qvec')  - W^{-}(\qvec, \qvec'; \qvec'') \psi (\qvec'')  \\
      \end{cases}
    \end{equation}
  
  \end{remark}

Similar to the one-species micro-macro model, the total energy of the system can be written as 
\begin{equation}
  \begin{aligned}
    E^{total}  =  \int_{\Omega} \Big[ \frac{1}{2} \rho |\uvec|^2 + \lambda \int  & \psi_A ( \ln \psi_A - 1) +  \psi_A U_A(\qvec)  + \psi_B  (\ln \psi_B - 1) + \psi_B U_B(\qvec)  \dd \qvec \Big] \dd \x, \\
  \end{aligned}
\end{equation}
where $\uvec$ is the velocity field of the macroscopic flow satisfying the incompressible condition $\nabla \cdot \uvec = 0$, $\lambda$ is the ratio between the macroscopic kinetic energy and microscopic elastic energy, and $U_{\alpha}(\qvec)$ is the potential energy associated with each species.

Throughout this paper, 
we disregard the diffusive effects of $A$ and $B$, and assume $\uvec_A = \uvec_B = \uvec$, which is the velocity of the macroscopic fluids.
Then the dissipation 
can be formulated as
\begin{equation}
  \begin{aligned}
  \triangle =  - \int_{\Omega} \Bigl[ & \eta | \nabla \uvec |^2  + \lambda  \int  \frac{\psi_A}{\xi_A} |\bm{V}_A - \nabla \uvec  \qvec|^2 +  \frac{\psi_B}{\xi_B} |\bm{V}_B - \nabla \uvec \qvec|^2 + \pp_t R ~  \Gamma(R, \pp_t R) \dd \qvec 
             \Bigr] \dd \x,  \\ 
  \end{aligned}
\end{equation}
where  $\xi_{\alpha}$ is a constant related to the relaxation time of each species, and $\pp_t R ~ \Gamma(R, \pp_t R) \geq 0$ is the additional dissipation due to the breakage and combination process. 
Different choices of $\Gamma(R, \pp_t R)$ determine different reaction rates. 
A typical choice of $\Gamma(R, \pp_t R)$ is
\begin{equation}
\Gamma(R, \pp_t R) = \ln \left(\frac{\pp_t R}{\eta(c(R))} + 1 \right).
\end{equation}
Recall (\ref{eq_Rl}), we can obtain the equation of $R$ as
\begin{equation}
  \ln \left(\frac{\pp_t R}{\eta(c(R))} + 1 \right) = \mu_A - 2 \mu_B,
\end{equation}
where $\mu_{\alpha} = \frac{\delta \mathcal{F}}{\delta \psi_{\alpha}} = \ln \psi_{\alpha} + U_{\alpha}$ is the chemical potential of species $A$ and $B$.
A further calculation leads to 
\begin{equation}\label{reaction_rate_1}
\pp_t R = \eta(R) \left( \exp\left( - (U_B - 2 U_A)  \right) \frac{\psi_A}{\psi_B^2} - 1 \right).
\end{equation}
If $\eta(R) = k_2 (\qvec) \psi_B^2$, (\ref{reaction_rate_1}) can be further simplified as
\begin{equation}\label{micro_LMA}
\pp_t R = k_1(\qvec) \psi_A - k_2(\qvec) \psi_B^2,
\end{equation}
where $$k_1(\qvec) = \frac{k_2(\qvec)}{K_{eq}(\qvec)}, \quad
  K_{eq}(\qvec) =  \frac{\psi_A^{\infty}}{ (\psi_B^{\infty})^2} = \exp \left(2 U_B - U_A \right), $$ which is the law of mass action at the microscopic level. $K_{eq} (\qvec)$ is the equilibrium constant for given $\qvec$.

The derivation of the mechanical part of the two-species model is almost same to that in the one-species case. In the microscopic scale, a standard EnVarA leads to 
\begin{equation}
 \psi_{\alpha} \nabla_{\qvec} \mu_{\alpha} = - \dfrac{1}{\xi_{\alpha}} \psi_{\alpha} (V_{\alpha} - (\nabla \uvec) \qvec) 
\end{equation}
that is
\begin{equation}
  \psi_{\alpha} V_{\alpha}  = - \xi_{\alpha}  (\psi_{\alpha} \nabla_{\qvec} U_{\alpha} + \nabla_{\qvec} \psi_{\alpha}) + (\nabla \uvec) \qvec) \psi_{\alpha}.
\end{equation}
Hence, the microscopic equation is given by
\begin{equation}\label{eq_psi}
  \begin{cases}
   & \pp_t \psi_{A} + \uvec \cdot \nabla \psi_{A} + \nabla_{\qvec} \cdot (\nabla \uvec \qvec \psi_{A}) - \xi_{A} \nabla_{\qvec} \cdot (\nabla_{\qvec} \psi_{A} + \nabla_{\qvec} U_{A}\psi_{A}) = - \pp_t R, \\
  &   \pp_t \psi_{B} + \uvec \cdot \nabla \psi_{B} + \nabla_{\qvec} \cdot (\nabla \uvec \qvec \psi_{B}) -  \xi_{B} \nabla_{\qvec} \cdot (\nabla_{\qvec} \psi_{B} + \nabla_{\qvec} U_{B}\psi_{B} ) = 2 \pp_t R, \\
  \end{cases}
\end{equation}
where $\pp_t R$ is defined in (\ref{micro_LMA}). On the macroscopic scale, similar to the one species case, by an energetic variational approach, we can obtain
\begin{equation}\label{eq_moment}
  \rho (\uvec_t + (\uvec \cdot \nabla) \uvec) + \nabla p = \eta \Delta \uvec +  \lambda \nabla \cdot{\bm \tau}
\end{equation}
where $\bm{\tau}$ is the induced stress from the microscopic configurations
\begin{equation}\label{def_tau_MM}
  \begin{aligned}
\bm{\tau} & =  \int \psi_A \nabla_{q} \mu_A \otimes q \dd q + \int \psi_B \nabla_{q} \mu_B \otimes \qvec \dd \qvec  \\
& = \int  \left( \nabla_{q} U_A \otimes \qvec \psi_A + \nabla_{q} U_B \otimes \qvec \psi_B \right) \dd \qvec  - (n_A + n_B) \mathrm{I} \\
  \end{aligned}
\end{equation}

Hence, the final macro-micro system is given by
\begin{equation}\label{Final_MM}
  \begin{cases}
    &  \rho (\pp_t \uvec + (\uvec \cdot \nabla) \uvec) + \nabla p = \eta \Delta \uvec + \lambda \nabla \cdot \bm{\tau}   \\
    & \nabla \cdot \uvec = 0 \\
    &  \pp_t \psi_{A} + \uvec \cdot \nabla \psi_{A} + \nabla_{\qvec} \cdot (\nabla \uvec \qvec \psi_{A}) - \xi_{A} \nabla_{\qvec} \cdot (\nabla_{\qvec} \psi_{A} + \nabla_{\qvec} U_A \psi_A  ) =  - \pp_t R  \\
    &  \pp_t \psi_{B} + \uvec \cdot \nabla \psi_{B} + \nabla_{\qvec} \cdot (\nabla \uvec \qvec \psi_{B}) - \xi_B \nabla_{\qvec} \cdot (\nabla_{\qvec} \psi_{B} + \nabla_{\qvec} U_B \psi_B ) = 2 \pp_t R  \\  
  \end{cases}
\end{equation}
where
\begin{equation}
  \pp_t R = k_1 (\qvec) \psi_A - k_2 (\qvec) \psi_B^2,
  \end{equation}
  and ${\bm \tau}$ is the stress tensor given by (\ref{def_tau_MM}).
According to the previous derivation,  it is easy to show that the system satisfies the following energy-dissipation property:
\begin{equation}\label{Energy_Id}
  \begin{aligned}
& \frac{\dd}{\dd t} \int \left[ \frac{1}{2} \rho |\uvec|^2 + \lambda \int \psi_A (\ln \psi_A - 1 + U_A) + \psi_B (\ln \psi_B - 1 + U_B) \dd \qvec \right]  \dd \x \\
& = - \int \left[  \eta |\nabla \uvec|^2 +  \frac{\lambda}{\xi_A} \int \psi_A |\nabla_{\qvec} (\ln \phi_A + U_A)|^2 \dd \qvec + \frac{\lambda}{\xi_B} \int \psi_B |\nabla_{\qvec} (\ln \psi_B + U_B) |^2 \dd \qvec \right. \\
&  \left. \quad \quad \quad  + \lambda \int (k_1 (\qvec) \psi_A - k_2 (\qvec) \psi_B^2) \ln \left( \frac{k_1 (\qvec) \psi_A}{k_2 (\qvec) \psi_B^2}  \right) \dd \qvec \right] \dd \x,  \\ 
  \end{aligned}
\end{equation}


\section{Moment closure models}
The micro-macro model (\ref{Final_MM}) provides a thermodynamically consistent multi-scale description to wormlike micellar solutions. However, it might be difficult to study this model directly, as the microscopic equation (\ref{eq_psi}) is high dimensional.  Notice that the macroscopic stress tensor only involves the zeroth and second moments of the number distribution functions of two species, it is a natural idea to derive a coarse-grained macroscopic equation from the original micro-macro model through moment closure.
Moment closure is a powerful tool to obtain coarse-grained macroscopic constitutive equations from more detailed micro-macro models for complex fluids \cite{doi1981molecular,du2005fene, feng1998closure, ottinger2009stupendous,wang1997comparative}.
One challenge in moment closure is to preserve the thermodynamic structures, i.e., the coarse-grained system should satisfy a energy-dissipation law analogous to the energy-dissipation law of the original system \cite{ottinger2009stupendous, hyon2008maximum}. 
The presence of the chemical reaction imposes additional difficulties for closure approximations. 

Throughout this section, we assume the potential energy $U_{\alpha}$ to be
\begin{equation}\label{UA_UB}
U_A = \frac{1}{2} H_A |\qvec|^2 + \sigma_A, \quad U_B = \frac{1}{2} H_{B} |\qvec|^2 + \sigma_B,
\end{equation}
where $\sigma_A$ and $\sigma_B$ are constants related to the equilibrium of the breakage and combination procedure, $H_A$ and $H_B$ are Hookean spring constants associated with species $A$ and $B$. Moreover, we assume that
\begin{equation}
  H_A = 2 H_B,
  \end{equation}
  then $K_{eq} = \exp(2 \sigma_B - \sigma_A)$ is a constant, which enables us to have a model with both $k_1$ and $k_2$ being constants. Same assumption is used in the GCB model \cite{germann2013nonequilibrium}. 
Other types of potential energies can be considered but will result in more complicated closure systems. 

\begin{remark}
  The assumption $H_A = 2 H_B$ is the consequence of the detailed balance condition 
  for the reaction $\qvec + \qvec \ce{<=>[k_2][k_1]} \qvec$ with constant reaction rates $k_i$, i.e.,
  \begin{equation}
k_1 \psi_A^{\infty}(\qvec) = k_2 (\psi_B^{\infty} (\qvec) )^2, \quad \psi_{\alpha}^{\infty} = C_{\alpha} \exp\left( - \frac{1}{2} H_{\alpha} \qvec^{\rm T} \qvec \right),
  \end{equation}
  where $\psi_{\alpha}^{\infty}$ is an equilibrium number density distribution for each species and  $C_{\alpha}$ is a constant
  . If the reaction mechanism $\qvec + \qvec \ce{<=>[k_2][k_1]} \alpha \qvec$ is assumed, then the detailed balance condition requires 
  \begin{equation}
    \alpha^2 H_A = 2 H_B.
  \end{equation}
  We have $2 H_A = H_B$ for $\alpha = 2$, which is assumption in the VCM model.

  \end{remark}

With the assumption $H_A = 2 H_B$, the global equilibrium distribution of the system is given by
\begin{equation}
\psi^{\infty}_A = \frac{n_A^{\infty}}{ (2 \pi H_A^{-1})^{d/2}} \exp \left(- \frac{1}{2} H_A \qvec^{\rm T} \qvec \right), \quad \psi^{\infty}_B = \frac{n_B^{\infty}}{ (2 \pi H_B^{-1})^{d/2}} \exp\left(- \frac{1}{2} H_B \qvec^{\rm T} \qvec\right),
\end{equation} 
where
$n_{A}^{\infty}$ and $n_{B}^{\infty}$ are number densities at the global equilibrium. 
Correspondingly, the second moments at the global equilibrium are given by
\begin{equation}
 {\bf A}_{\rm eq} = \int \qvec \otimes \qvec \psi^{\infty}_A \dd \x = \frac{n_A^{\infty}}{H_A} {\bf I}, \quad {\bf B}_{\rm eq} = \int \qvec \otimes \qvec \psi^{\infty}_B \dd \x = \frac{n_B^{\infty}}{H_B} {\bf I}.
   \end{equation}
Let $K_{eq}^{\rm macro} = \frac{n_A^{\infty}}{ (n_B^{\infty})^2}$ be the macroscopic equilibrium constant and a direct computation shows that
\begin{equation}\label{Micro_Macro_K_eq}
K_{\rm eq} = e^{2 \sigma_B - \sigma_A}  = \frac{\psi_A^{\infty}}{(\psi_{B}^{\infty})^2} 
= \frac{2^d \pi^{d/2}}{H_B^{d/2}}   K_{eq}^{\rm macro},
\end{equation}
which reveals the connection between $K_{eq}$ and $K_{eq}^{\rm macro}$.




\subsection{Maximum entropy closures}
Maximum entropy closures, also known as quasi-equilibrium approximations \cite{gorban2001corrections, hyon2008maximum, klika2019dynamic, pavelka2018multiscale, wang2008crucial}, have been successfully used to derive effective macroscopic equations from the micro-macro multi-scale models for polymeric fluids, including nonlinear dumbbell models \cite{wang2008crucial, hyon2008maximum} and liquid
crystal polymers \cite{ball2010nematic, ilg2003canonical, han2015microscopic,yu2010nonhomogeneous}. For nonlinear dumbbell models with FENE potential, it has been shown that maximum entropy closure can capture the hysteretic behavior and maintain the energy-dissipation property \cite{hyon2008maximum, wang2008crucial}.

The idea of the maximum entropy closure is to maximize the ``relative entropy'' subjected to moments \cite{gorban2001corrections, hyon2008maximum, ottinger2009stupendous, wang2008crucial}. For our system, we can approximate $\psi_{\alpha}$ ($\alpha = A, B$) based on its zeroth moment $n_{\alpha}$ and second moment ${\bf M}_{\alpha}$ by solving the constrained optimization problem  
\begin{equation}\label{MinProblem}
  \begin{aligned}
\psi_{\alpha}^{*} = \text{argmin}_{\mathcal{A}} \int_{\mathbb{R}^d} \psi \ln \psi + \psi U_{\alpha} (\qvec) \dd \qvec, 
  \end{aligned}
\end{equation}
where 
\begin{equation}
\mathcal{A} = \left\{  \psi: \mathbb{R}^d \rightarrow \mathbb{R}, \psi \geq 0 ~|~ \int \psi \dd \qvec = n_{\alpha}, \quad \int(\qvec \otimes \qvec)  \psi  \dd \qvec = {\bf M}_{\alpha}  \right\}.
\end{equation}

\begin{proposition}
For the Hookean potential $U_{\alpha} = \frac{1}{2} H_{\alpha} |\qvec|^2 + \sigma_{\alpha}$, the minimization problem (\ref{MinProblem}) has a unique minimizer $\psi^{*}_{\alpha}$ in the class $\mathcal{A}$ for given $n_{\alpha} > 0$ and a symmetric positive-definite matrix ${\bf M}_{\alpha}$. Moreover, $\psi^*_{\alpha}$ is given by
\begin{equation*}
  \psi_{\alpha}^* (\qvec) = \frac{n_{\alpha}}{(2 \pi)^{d/2} (\det \widetilde{\bf M}_{\alpha})^{1/2}} \exp(-\frac{1}{2} \qvec^{\rm T} \widetilde{\bf M}_{\alpha}^{-1} \qvec), 
  \end{equation*}
where $\widetilde{\bf M}_{\alpha} = {\bf M}_{\alpha} / n_{\alpha}$. We call $\psi_{\alpha}^*$ is the quasi-equilibrium state associated with $n_\alpha$ and ${\bf M}_{\alpha}$.
\end{proposition}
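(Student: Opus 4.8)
The plan is to locate the minimizer by a Lagrange‑multiplier heuristic, check directly that the resulting centered Gaussian lies in $\mathcal{A}$, and then prove it is the unique minimizer through a relative‑entropy (Gibbs) inequality, which avoids any separate compactness argument. A convenient preliminary observation is that on $\mathcal{A}$ the potential term is constant: since $\int \qvec\otimes\qvec\,\psi\dd\qvec = {\bf M}_\alpha$ is prescribed, $\int \psi U_\alpha \dd\qvec = \tfrac12 H_\alpha \tr {\bf M}_\alpha + \sigma_\alpha n_\alpha$ for every $\psi\in\mathcal{A}$, so minimizing the full objective over $\mathcal{A}$ is equivalent to minimizing the entropy $\int \psi\ln\psi\dd\qvec$ alone, subject to the same two constraints.

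First I would write the Euler--Lagrange condition. With a scalar multiplier for $\int\psi\dd\qvec = n_\alpha$ and a symmetric‑matrix multiplier ${\bm\Lambda}$ for the second‑moment constraint, formal differentiation forces the ansatz $\psi = C\exp(-\tfrac12\qvec^{\rm T}{\bf S}\qvec)$ with ${\bf S}$ symmetric (this step is only a device to produce the candidate). Using the standard Gaussian integrals $\int C e^{-\tfrac12\qvec^{\rm T}{\bf S}\qvec}\dd\qvec = C(2\pi)^{d/2}(\det{\bf S})^{-1/2}$ and $\int \qvec\otimes\qvec\, C e^{-\tfrac12\qvec^{\rm T}{\bf S}\qvec}\dd\qvec = C(2\pi)^{d/2}(\det{\bf S})^{-1/2}{\bf S}^{-1}$, the two constraints become ${\bf S}^{-1} = {\bf M}_\alpha/n_\alpha = \widetilde{\bf M}_\alpha$ and $C = n_\alpha(2\pi)^{-d/2}(\det\widetilde{\bf M}_\alpha)^{-1/2}$. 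Because ${\bf M}_\alpha$ is symmetric positive‑definite, ${\bf S} = \widetilde{\bf M}_\alpha^{-1}$ is admissible, so the claimed $\psi_\alpha^*$ is a well‑defined, strictly positive element of $\mathcal{A}$.

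To promote this to a rigorous proof, for an arbitrary $\psi\in\mathcal{A}$ I would use the splitting $\psi\ln\psi + \psi U_\alpha = \psi\ln(\psi/\psi_\alpha^*) + \psi(\ln\psi_\alpha^* + U_\alpha)$. Since $\ln\psi_\alpha^* + U_\alpha = \ln C + \sigma_\alpha + \tfrac12 H_\alpha|\qvec|^2 - \tfrac12\qvec^{\rm T}\widetilde{\bf M}_\alpha^{-1}\qvec$ is affine in the constant $1$ and in the entries of $\qvec\otimes\qvec$, and $\psi$ shares its zeroth and second moments with $\psi_\alpha^*$, the term $\int\psi(\ln\psi_\alpha^*+U_\alpha)\dd\qvec$ is finite and independent of the choice of $\psi\in\mathcal{A}$. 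The remaining term is the Kullback--Leibler divergence $\int\psi\ln(\psi/\psi_\alpha^*)\dd\qvec$, which by $x\ln x\ge x-1$ applied to $x=\psi/\psi_\alpha^*$ together with $\int(\psi-\psi_\alpha^*)\dd\qvec=0$ is well defined in $[0,+\infty]$ and vanishes if and only if $\psi=\psi_\alpha^*$ almost everywhere. Hence the objective at $\psi$ is at least its value at $\psi_\alpha^*$, with equality only at $\psi_\alpha^*$; this delivers existence, minimality and uniqueness simultaneously (uniqueness also follows from strict convexity of $t\mapsto t\ln t$ and convexity of $\mathcal{A}$).

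I expect the only genuine subtlety to be ensuring the objective is never of the indeterminate form $\infty-\infty$ on $\mathcal{A}$, so that the above inequality is meaningful. This is exactly what the splitting repairs: $U_\alpha$ is bounded below, and $\psi\ln(\psi/\psi_\alpha^*)\ge\psi-\psi_\alpha^*\ge-\psi_\alpha^*$ is bounded below by an integrable function, so the negative part of the integrand is integrable and $\int \psi\ln\psi + \psi U_\alpha\dd\qvec \in(-\infty,+\infty]$ is well defined for every admissible $\psi$. All the remaining steps — the Gaussian integrals and the constraint matching — are routine.
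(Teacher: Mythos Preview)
Your argument is correct and in fact more complete than the paper's. Both proofs begin the same way, using Lagrange multipliers to identify a centered Gaussian as the candidate minimizer. The paper then stops: it simply notes that $\psi_\alpha^*/n_\alpha$ is a multivariate normal $\mathcal{N}(0,\bm\Sigma)$ and that $\bm\Sigma$ is pinned down by the second-moment constraint, citing a reference for the Gaussian moment identity. In particular, the paper never explicitly verifies that the Euler--Lagrange critical point is a \emph{global} minimizer, nor does it rule out non-Gaussian competitors in $\mathcal{A}$.

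Your proof supplies exactly this missing global step. The preliminary observation that $\int\psi\,U_\alpha\,\dd\qvec$ is constant on $\mathcal{A}$ is a nice simplification the paper does not make, and your relative-entropy splitting $\psi\ln\psi + \psi U_\alpha = \psi\ln(\psi/\psi_\alpha^*) + \psi(\ln\psi_\alpha^* + U_\alpha)$ is the standard Gibbs-variational device that upgrades the Lagrange heuristic to a genuine existence-and-uniqueness proof: the second term is constant on $\mathcal{A}$ because $\ln\psi_\alpha^* + U_\alpha$ is a linear combination of $1$ and the entries of $\qvec\otimes\qvec$, and the first term is a Kullback--Leibler divergence, nonnegative with equality only at $\psi_\alpha^*$. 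Your treatment of the integrability subtlety (ruling out $\infty-\infty$) is also something the paper omits. So your route is both different in its second half and strictly more rigorous; the price is a slightly longer argument, while the paper's version is shorter but leaves the minimality and uniqueness essentially asserted.
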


\begin{proof}

The solution to the constrained optimization problem (\ref{MinProblem}) is given by
\begin{equation}\label{Lg_var}
  \begin{aligned}
\frac{\delta}{\delta \psi} \Bigl\{  & \int \psi \ln \psi + U_{\alpha} (\qvec) \psi \dd \qvec + \lambda_0 \left[ \int \psi \dd \qvec - n_{\alpha} \right] + \sum_{ij} \lambda_{ij} \left[ \int q_i q_j \psi \dd \qvec - ({\bf M}_{\alpha})_{ij} \right] \Bigr\} = 0, \\
  \end{aligned}
\end{equation}
where $\lambda_0$ and $\lambda_{ij}$ are Lagrangian multipliers. From (\ref{Lg_var}), one can obtain that
\begin{equation}
\psi_{\alpha}^{*} = C \exp(- \frac{1}{2} H_{\alpha} |\qvec|^2 - \sigma_{\alpha})\exp \left(- \lambda_0- \sum_{ij} \lambda_{ij} q_i q_j \right),
\end{equation}
where $C > 0$ is a constant. Since $\int \psi^{*}_{\alpha} \dd \x = n_{\alpha}$, $\psi^{*}_{\alpha}$ can be written as
\begin{equation}
\psi^{*}_{\alpha} = \frac{n_{\alpha}}{Z(\lambda_{ij})}  \exp \left( - \frac{1}{2} H_{\alpha} |\qvec|^2- \sum_{ij} \lambda_{ij} q_i q_j \right) .
\end{equation}
where $Z(\lambda_{ij}) = \int \exp(- \frac{1}{2} H_{\alpha} |\qvec|^2) \exp \left(- \sum_{ij} \lambda_{ij} q_i q_j \right) \dd \qvec$ is the normalizing constant. 
Since $\psi_{\alpha}^{*}/n_{\alpha}$ is the multivariate normal distribution $\mathcal{N}(0, \bm{\Sigma})$ with the covariance matrix given by
\begin{equation}
 {\bm \Sigma} = (H_{\alpha} {\bf I} + 2 {\bm \Lambda})^{-1},
\end{equation}
which is uniquely determined by its second moment, i.e., ${\bm \Sigma} = (H_{\alpha} {\bf I} + 2 {\bm \Lambda})^{-1} = {\bf M}_{\alpha} / n_{\alpha}$ \cite{hu2007new}.  
\end{proof}

Thus, for given $n_A > 0$ , $n_B > 0$, positive-definite matrices ${\bf A}$ and ${\bf B}$, we can define the unique quasi-equilibrium states
\begin{equation}\label{closure_AB}
  \begin{aligned}
    & \psi^{*}_A = \frac{n_A}{(2 \pi)^{d/2} (\det \widetilde{\bf A})^{1/2}} \exp \left( - \frac{1}{2} \qvec^{\rm T} \widetilde{\bf A}^{-1} \qvec \right).  \\
  & \psi^{*}_B  = \frac{n_B}{(2 \pi)^{d/2} (\det \widetilde{\bf B})^{1/2}} \exp \left( - \frac{1}{2} \qvec^{\rm T} \widetilde{\bf B}^{-1} \qvec \right), \\
  \end{aligned}
\end{equation}
where $\widetilde{\bf A} = {\bf A} / n_A$ and $\widetilde{\bf B} = {\bf B} / n_B$ are conformation tensors \cite{germann2013nonequilibrium}. 
We call the manifold formed by all quasi-equilibrium distributions as the \emph{quasi-equilibrium manifold}, denoted by
\begin{equation}\label{QE_M}
\mathcal{M}^* = \left\{ \psi^* = \frac{n}{(2 \pi)^{d/2} (\det \widetilde{\bf M})^{1/2}} \exp \left( - \frac{1}{2} \qvec^{\rm T} \widetilde{\bf M}^{-1} \qvec \right) ~|~ n > 0, \widetilde{\bf M}\text{~symmetric, positive-definite} \right \}
\end{equation}
For $\psi_{\alpha}^* \in \mathcal{M}^*$, its second moment ${\bf M}_{\alpha} = n_{\alpha} \widetilde{\bf M}_{\alpha}$ depends on its zeroth moment $n_{\alpha}$.

 \subsection{The moment closure model: variation-then-closure}
 We can apply the maximum entropy closure to the micro-macro model (\ref{Final_MM}) directly. 
Since $k_1$ and $k_2$ are constants, by integrating (\ref{kinematic}) over $\bm{q}$, we have
\begin{equation}\label{int_n}
  \begin{cases}
    & \pp_t n_A +  \nabla \cdot (n_A \uvec)  = - k_1 n_A + k_2 \int \psi_B^2 \dd \qvec \\
    & \pp_t n_B +  \nabla \cdot (n_B \uvec) = 2 k_1 n_A - 2 k_2 \int \psi_B^2 \dd \qvec. \\
  \end{cases}
\end{equation}
Meanwhile, multiplying both side of (\ref{kinematic}) by $\bm{q} \otimes \bm{q}$ and integrating over $\bm{q}$  arrives at
\begin{equation}\label{eq_M}
  \begin{cases}
    & \pp_t {\bf A} + (\uvec \cdot \nabla) {\bf A} 
    - (\nabla \uvec) {\bf A} - {\bf A} (\nabla \uvec)^{\rm T} 
    = \xi_A (2 n_A {\bf I} - 2 H_A {\bf A}) 
     - k_1 {\bf A} + k_2 \int \qvec \otimes \qvec \psi_B^2 \dd \qvec \\
    & \pp_t {\bf B} + (\uvec \cdot \nabla) {\bf B}  
    - (\nabla \uvec) {\bf B} - {\bf B} (\nabla \uvec)^{\rm T} 
    =  \xi_B (2 n_B {\bf I} - 2 H_B {\bf B}) 
     + 2 k_1 {\bf A} - 2 k_2 \int \qvec \otimes \qvec \psi_B^2 \dd \qvec.
  \end{cases}
\end{equation}
Therefore, for Hookean spring potentials and constant reaction rates, the moment closure is needed only due to the nonlinear reaction term in the microscopic scale. 
 With the maximum entropy closure (\ref{closure_AB}), these two terms can be computed out explicitly.
Indeed, notice that 
\begin{equation}
  \int (\psi_B^*)^2 \dd \qvec = \int_{\mathbb{R}^d} \frac{n_B^2}{Z_B^2} \exp (- \qvec^{\rm T} \widetilde{\bf B}^{-1} \qvec) \dd \qvec,
\end{equation}
by letting $\bm{q} = \frac{1}{\sqrt{2}} \widetilde{\qvec}$, we have
\begin{equation}
  \begin{aligned}
  \int (\psi_B^*)^2 \dd \qvec & =  2^{-d/2} \int_{\mathbb{R}^d} \frac{n_B^2}{Z_B^2} \exp (- \frac{1}{2} \widetilde{\qvec}^{\rm T} \widetilde{\bf B}^{-1} \widetilde{\qvec}) \dd \widetilde{\qvec}  = \frac{n_B^{d/2}}{ 2^d  \pi^{d/2}  (\det B)^{1/2}} n_B^2.
  \end{aligned}
\end{equation}
Hence,
\begin{equation}\label{close_n}
  \int k_1 \psi_A^* - k_2 (\psi_B^*)^2 \dd \qvec = k_1 n_A - \widetilde{k}_2 ({\bf B}) n_B^2,
 \end{equation}
where $\widetilde{k}_2 ({\bf B})$ is given by
\begin{equation}
\widetilde{k}_2 ({\bf B}) = \frac{n_{B}^{d/2}}{  2^d (\pi)^{d/2} (\det ({\bf B})  )^{1/2} }  k_2.
\end{equation}
Interestingly, in this case, the maximum entropy closure gives us the law of mass action on number densities $\ce{A <=>[k_1][\widetilde{k}_2] 2 B}$,
as in the VCM and GCB models. 
By a similar calculation, we have
\begin{equation}\label{close_M}
  \begin{aligned}
 k_2 \int (\psi_B^*)^2 (\qvec \otimes \qvec) \dd \qvec & = \frac{(\sqrt{2})^{-d}}{2} \int_{\mathbb{R}^d}  \frac{n_B^2}{Z_B^2} \exp ( - \frac{1}{2} \widetilde{\qvec}^{\rm T} \widetilde{\bf B}^{-1} \widetilde{\qvec} ) \tilde{\qvec} \otimes \tilde{\qvec} \dd \tilde{\qvec} = \frac{1}{2} \widetilde{k}_2 ({\bf B}) n_B {\bf B}. \\
  \end{aligned}
\end{equation}
Therefore, applying the maximum entropy approximation to (\ref{Final_MM}), we can obtain a moment closure system
\begin{equation}\label{Direct-closure}
  \begin{cases}
    &  \rho (\pp_t \uvec + (\uvec \cdot \nabla) \uvec) + \nabla p = \eta \Delta \uvec + \lambda \nabla \cdot \left(  H_A {\bf A} + H_B {\bf B} - (n_A + n_B) {\bf I} \right)   \\
    & \nabla \cdot \uvec = 0 \\
    & \pp_t n_{A} + \nabla \cdot (n_A \uvec ) = - k_1 n_A + \widetilde{k}_2 ({\bf B}) n_B^2, \\
    & \pp_t n_{B} + \nabla \cdot (n_B \uvec ) = 2 k_1 n_A - 2 \widetilde{k}_2 ({\bf B}) n_B^2, \\
    &\pp_t {\bf A} + (\uvec\cdot\nabla) {\bf A} - (\nabla \uvec) {\bf A} - {\bf A} (\nabla \uvec)^{\rm T} = 2 \xi_A (n_A {\bf I} - H_A {\bf A}) - k_1 {\bf A}+ \frac{1}{2} \widetilde{k}_2 ({\bf B}) n_B {\bf B} \\
    &\pp_t {\bf B} + (\uvec\cdot\nabla) {\bf B} - (\nabla \uvec) {\bf B} - {\bf B} (\nabla \uvec)^{\rm T} =  2 \xi_B (n_B {\bf I} - H_B {\bf B}) + 2 k_1 {\bf A} - \widetilde{k}_2 ({\bf B}) n_B {\bf B}.  \\
  \end{cases}
\end{equation}
where $$\widetilde{k}_2  ({\bf B}) = \frac{n_B^{d/2}}{2^d (\pi)^{d/2} (\det ({\bf B})  )^{1/2}} k_2.$$
This is the model obtained by the ``variation-then-closure'', i.e., applying the maximum entropy closure at the PDE level. One can prove that the closure system (\ref{Direct-closure}) possesses an energy-dissipation law. To show this, we first look at the case with $\uvec = 0$. 
\begin{proposition}
In absence of the flow field $\uvec = 0$,  given $n_A > 0$, $n_B > 0$ and symmetric, positive-definite matrices ${\bf A}$ and ${\bf B}$, the closure system (\ref{Direct-closure}) satisfies the energy-dissipation law
  \begin{equation}\label{ED_cl_1}
    \frac{\dd}{\dd t} \mathcal{F}^{\rm CL}(n_A, n_B, {\bf A}, {\bf B}) = - \triangle^{\rm CL} \leq 0,
    \end{equation}
  where $\mathcal{F}^{\rm CL}(n_A, n_B, {\bf A}, {\bf B})$ is the coarse-grained free energy given by 
  \begin{equation}\label{E_closure}
    \begin{aligned}
   \mathcal{F}^{\rm CL} (n_A, n_B, {\bf A}, {\bf B})  & = \int n_A \left( \ln \left( \frac{n_A}{n_A^{\infty}}  \right) - 1 \right) + n_B \left( \ln \left( \frac{n_B}{n_B^{\infty}}  \right) - 1 \right) \\
   & - \frac{n_A}{2}   \ln \det \left( \frac{H_A {\bf A}}{n_A}  \right)  + \frac{1}{2}  \tr (H_A  {\bf A} - n_A {\bf I}  )  \\
   & - \frac{n_B}{2} \ln \det \left( \frac{H_B {\bf B}}{n_B}  \right)    + \frac{1}{2} \tr (H_B  {\bf B} -  n_B {\bf I}) )~  \dd \x, \\
    \end{aligned}
  \end{equation}
  and $\triangle^{\rm CL}$ is the rate of energy dissipation, given by
  \begin{equation}\label{D_closure}
    \begin{aligned}
  & \triangle^{\rm CL} = \int \xi_A \tr \left( (H_A {\bf I }- n_A {\bf A}^{-1})^2 {\bf A}  \right) + \xi_B \tr \left( (H_B {\bf I} - n_B {\bf B}^{-1})^2 {\bf B}  \right) \\
  & \quad  + (k_1 n_A  - \widetilde{k}_2 ({\bf B}) n_B^2) \left( \ln \left( \frac{n_A}{n_A^{\infty}} \right) -  2 \ln \left( \frac{n_B}{n_B^{\infty}}  \right) +  \ln \frac{\det  (H_B {\bf B} / n_B)}{\sqrt{ \det (H_A {\bf A}/ n_A) }} \right) \\
  & \quad  + \tr \left( (k_1 {\bf A} - \frac{1}{2}\widetilde{k}_2({\bf B}) n_B {\bf B}) (n_B {\bf B}^{-1} - \frac{1}{2}  n_A {\bf A}^{-1}) \right)  \dd \x. \\
    \end{aligned}
  \end{equation}
  In particular, under the condition that $n_A > 0$, $n_B > 0$, and ${\bf A}$ and ${\bf B}$ are symmetric positive-definite, $\triangle^{\rm CL} \geq 0$.
\end{proposition}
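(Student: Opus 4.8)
The plan is to prove the energy law (\ref{ED_cl_1}) by differentiating $\mathcal{F}^{\rm CL}$ along the closure flow, and then to recognize $\triangle^{\rm CL}$ as the dissipation of the original micro-macro model (\ref{Final_MM}) evaluated at the quasi-equilibrium states, from which non-negativity is immediate. With $\uvec=0$ the closure system (\ref{Direct-closure}) is a system of ODEs for $(n_A,n_B,\mathbf{A},\mathbf{B})$ at each $\x$. First I would record the variational derivatives of (\ref{E_closure}): using $\ln\det(H_\alpha\mathbf{M}/n_\alpha)=d\ln H_\alpha+\ln\det\mathbf{M}-d\ln n_\alpha$ and $\partial_{\mathbf{M}}\ln\det\mathbf{M}=\mathbf{M}^{-1}$ one finds
\[
\frac{\delta\mathcal{F}^{\rm CL}}{\delta n_A}=\ln\frac{n_A}{n_A^\infty}-\frac12\ln\det\frac{H_A\mathbf{A}}{n_A},\qquad \frac{\delta\mathcal{F}^{\rm CL}}{\delta\mathbf{A}}=\frac12\big(H_A\mathbf{I}-n_A\mathbf{A}^{-1}\big),
\]
and the analogous formulas for $n_B,\mathbf{B}$. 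Then $\frac{\dd}{\dd t}\mathcal{F}^{\rm CL}=\int \frac{\delta\mathcal{F}^{\rm CL}}{\delta n_A}\dot n_A+\frac{\delta\mathcal{F}^{\rm CL}}{\delta n_B}\dot n_B+\frac{\delta\mathcal{F}^{\rm CL}}{\delta\mathbf{A}}{:}\dot{\mathbf{A}}+\frac{\delta\mathcal{F}^{\rm CL}}{\delta\mathbf{B}}{:}\dot{\mathbf{B}}\,\dd\x$, into which I substitute the right-hand sides of (\ref{Direct-closure}).

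I would split the result into relaxation ($\xi_\alpha$) and reaction contributions. For the relaxation part, the identity $n_A\mathbf{I}-H_A\mathbf{A}=-\mathbf{A}(H_A\mathbf{I}-n_A\mathbf{A}^{-1})$, together with the fact that $H_A\mathbf{I}-n_A\mathbf{A}^{-1}$ commutes with $\mathbf{A}$, turns $\frac{\delta\mathcal{F}^{\rm CL}}{\delta\mathbf{A}}{:}\,2\xi_A(n_A\mathbf{I}-H_A\mathbf{A})$ into $-\xi_A\tr\big((H_A\mathbf{I}-n_A\mathbf{A}^{-1})^2\mathbf{A}\big)$, i.e. minus the first group of $\triangle^{\rm CL}$; the $\mathbf{B}$-term is identical. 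For the reaction part, writing $J=k_1n_A-\widetilde k_2(\mathbf{B})n_B^2$ and $\mathbf{K}=k_1\mathbf{A}-\tfrac12\widetilde k_2(\mathbf{B})n_B\mathbf{B}$, the scalar terms collapse to $-J\big(\frac{\delta\mathcal{F}^{\rm CL}}{\delta n_A}-2\frac{\delta\mathcal{F}^{\rm CL}}{\delta n_B}\big)$ and the tensor terms to $-\mathbf{K}{:}\big(\frac{\delta\mathcal{F}^{\rm CL}}{\delta\mathbf{A}}-2\frac{\delta\mathcal{F}^{\rm CL}}{\delta\mathbf{B}}\big)$. Here $H_A=2H_B$ is essential: it cancels the $H_A\mathbf{I}$ and $H_B\mathbf{I}$ parts of $\frac{\delta\mathcal{F}^{\rm CL}}{\delta\mathbf{A}}-2\frac{\delta\mathcal{F}^{\rm CL}}{\delta\mathbf{B}}$, leaving $n_B\mathbf{B}^{-1}-\tfrac12 n_A\mathbf{A}^{-1}$, which is exactly the tensor in the third group; while $\frac{\delta\mathcal{F}^{\rm CL}}{\delta n_A}-2\frac{\delta\mathcal{F}^{\rm CL}}{\delta n_B}$ is precisely the logarithmic bracket $\ln(n_A/n_A^\infty)-2\ln(n_B/n_B^\infty)+\ln\big(\det(H_B\mathbf{B}/n_B)/\sqrt{\det(H_A\mathbf{A}/n_A)}\big)$ appearing in the second group. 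Collecting the pieces gives $\frac{\dd}{\dd t}\mathcal{F}^{\rm CL}=-\triangle^{\rm CL}$.

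For non-negativity, I would observe that, up to the overall constant $\lambda$, the three groups of $\triangle^{\rm CL}$ are exactly the three summands of the dissipation in (\ref{Energy_Id}) evaluated at $\psi_A=\psi_A^*,\psi_B=\psi_B^*$. Since $\psi_\alpha^*$ is the Gaussian of (\ref{closure_AB}), $\nabla_{\qvec}(\ln\psi_\alpha^*+U_\alpha)=(H_\alpha\mathbf{I}-n_\alpha\mathbf{M}_\alpha^{-1})\qvec$ (with $\mathbf{M}_A=\mathbf{A},\mathbf{M}_B=\mathbf{B}$), so $\xi_\alpha\int\psi_\alpha^*|\nabla_{\qvec}(\ln\psi_\alpha^*+U_\alpha)|^2\dd\qvec=\xi_\alpha\tr\big((H_\alpha\mathbf{I}-n_\alpha\mathbf{M}_\alpha^{-1})^2\mathbf{M}_\alpha\big)=\xi_\alpha\|\mathbf{M}_\alpha^{1/2}(H_\alpha\mathbf{I}-n_\alpha\mathbf{M}_\alpha^{-1})\|_F^2\ge 0$ because $\mathbf{M}_\alpha$ is symmetric positive-definite and $\xi_\alpha>0$; this is the first group. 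For the chemical part, $\ln\big(k_1\psi_A^*/(k_2(\psi_B^*)^2)\big)$ is an affine function of $\qvec\otimes\qvec$, say $c_0+\qvec^{\rm T}(n_B\mathbf{B}^{-1}-\tfrac12 n_A\mathbf{A}^{-1})\qvec$; integrating $k_1\psi_A^*-k_2(\psi_B^*)^2$ against it and using (\ref{close_n})–(\ref{close_M}) produces $c_0J+\tr\big(\mathbf{K}(n_B\mathbf{B}^{-1}-\tfrac12 n_A\mathbf{A}^{-1})\big)$, and a short computation with (\ref{Micro_Macro_K_eq}) and $H_A=2H_B$ identifies $c_0$ with the logarithmic bracket of the second group. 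Hence the second and third groups together equal $\int\big(k_1\psi_A^*-k_2(\psi_B^*)^2\big)\ln\big(k_1\psi_A^*/(k_2(\psi_B^*)^2)\big)\dd\qvec$, whose integrand has the form $(x-y)\ln(x/y)\ge 0$ with $x,y>0$; therefore $\triangle^{\rm CL}\ge 0$.

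The main obstacle is purely organizational: tracking signs and factors of $2$ in the reaction part so the $H_\alpha\mathbf{I}$ cross-terms cancel and the $n_\alpha\mathbf{M}_\alpha^{-1}$ pieces reassemble into the stated trace, and the analogous constant-matching step ($c_0$ versus the logarithmic bracket) — both hinging on $H_A=2H_B$ and on the relation (\ref{Micro_Macro_K_eq}) between the microscopic and macroscopic equilibrium constants. The flow-free reduction, the variational derivatives, and the Frobenius-norm estimate for the first group are routine.
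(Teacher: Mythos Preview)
Your proposal is correct and follows essentially the same route as the paper: differentiate $\mathcal{F}^{\rm CL}$ along the closure dynamics using the variational derivatives, collect the relaxation and reaction terms into $-\triangle^{\rm CL}$, and then establish $\triangle^{\rm CL}\ge 0$ by recognizing it as the micro-macro dissipation (\ref{Energy_Id}) evaluated at the quasi-equilibrium Gaussians $\psi_\alpha^*$. Your organization (splitting into $J$ and $\mathbf{K}$, and the Frobenius-norm remark for the $\xi_\alpha$-terms) is somewhat more explicit than the paper's, but the key steps---the cancellation via $H_A=2H_B$, the use of (\ref{close_n})--(\ref{close_M}) and (\ref{Micro_Macro_K_eq}), and the $(x-y)\ln(x/y)\ge 0$ observation---are the same.
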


\begin{remark}
  The coarse-grained free energy $\mathcal{F}^{\rm CL} (n_A, n_B, {\bf A}, {\bf B}) $ is same to the macroscopic free energy given in \cite{germann2013nonequilibrium}. The free energy contains two part: the Oldroyd-B type elastic energy  associated  with  species $A$ and $B$ \cite{hu2007new,zhou2018dynamics}, and the Lyapunov function of the chemical reaction $\ce{A <=>[k_1][\widetilde{k}_2^{\rm eq}] 2 B}$ on number density with $k_1 n_A^{\infty} = \widetilde{k}_2^{eq} (n_B^{\infty})^2$ and $\widetilde{k}_2^{\rm eq} = H^{d/2}/(2^d \pi^{d/2})$. 
\end{remark}

\begin{proof}


We first show that we have the identity (\ref{ED_cl_1}) if $n_A$, $n_B$, ${\bf A}$ and ${\bf B}$ satisfy equation (\ref{Direct-closure}) with $\uvec = 0$. Indeed, for $\mathcal{F}^{\rm CL} (n_A, n_B, {\bf A}, {\bf B})$, a direct computation leads to 
\begin{equation}\label{dFdt}
  \begin{aligned}
 & \frac{\dd}{\dd t} \mathcal{F}^{\rm CL} = \frac{\dd}{\dd t} \int     n_A \left( \ln \left( n_A / n_A^{\infty}  \right) - 1 \right) + n_B \left( \ln \left( n_B / n_B^{\infty}  \right) - 1 \right) \\
 & \quad \quad - n_A  \ln \det \left( H_A {\bf A} / n_A  \right) / 2  +  \tr (H_A  {\bf A} - n_A {\bf I}  ) / 2 \\
 & \quad \quad - n_B \ln \det \left( H_B {\bf B} / n_B  \right) / 2   + \tr (H_B  {\bf B} -  n_B {\bf I}) )/ 2~  \dd \x. \\
 & = \int (\ln n_A - \ln n_A^{\infty} - \ln \det (H_A {\bf A}) / 2+ d \ln (n_A) / 2) \pp_t n_A  \\
 & \quad \quad +  (\ln n_B - \ln n_B^{\infty} - \ln \det (H_B {\bf B}) / 2+ d \ln (n_B) / 2) \pp_t n_A  \\
 & \quad \quad + \tr ( ( H_A {\bf I} - n_A {\bf A}^{-1} ) \pp_t {\bf A}) ) / 2 +  \tr ( ( H_B {\bf I} - n_B {\bf B}^{-1} ) \pp_t {\bf B}) ) / 2 \dd \x. \\
  \end{aligned}
\end{equation}
Substituting (\ref{Direct-closure}) into (\ref{dFdt}), and rearranging term, we have
\begin{equation*}
  \begin{aligned}
\frac{\dd}{\dd t} \mathcal{F}^{\rm CL} & =  - \int \xi_A \tr \left( (H_A {\bf I }- n_A {\bf A}^{-1})^2 {\bf A}  \right) + \xi_B \tr \left( (H_B {\bf I} - n_B {\bf B}^{-1})^2 {\bf B}  \right) \\
& \quad  + (k_1 n_A  - \widetilde{k}_2 ({\bf B}) n_B^2) \left( \ln \left( \frac{n_A}{n_A^{\infty}} \right) -  2 \ln \left( \frac{n_B}{n_B^{\infty}}  \right) +  \ln \frac{\det H_B ({\bf B} / n_B)}{\sqrt{ \det (H_A {\bf A}/ n_A) }} \right) \\
& \quad  + \tr \left( (k_1 {\bf A} - \widetilde{k}_2({\bf B}) n_B {\bf B} / 2) (n_B {\bf B}^{-1} -   n_A {\bf A}^{-1} / 2) \right)  \dd \x. \\  
  \end{aligned}
\end{equation*}

To prove $\triangle^{\rm CL} \geq 0$, we first define the quasi-equilibrium state $\psi_A^*$ and $\psi_B^*$ for given $n_A > 0$, $n_B > 0$ and symmetric, positive-definite matrices ${\bf A}$ and ${\bf B}$. The existence and uniqueness of $\psi_A^*$ and $\psi_B^*$ have been shown in proposition 3.1.
Notice that for
\begin{equation}
  \psi_{\alpha}^*(\qvec) = \frac{n_{\alpha}}{\sqrt{ (2 \pi)^d \det({\bf M})}} \exp(- \frac{1}{2} \qvec^{T} {\bf M}^{-1}_{\alpha} \qvec). \quad \psi^{\infty}_{\alpha} = \frac{n_{\alpha}^{\infty}}{\sqrt{ (2 \pi)^d H_{\alpha}^{-d}}} \exp(- \frac{1}{2} H_{\alpha} \qvec^{T}  \qvec),
\end{equation}
we have
\begin{equation*}\label{F_1}
  \begin{aligned}
    \int \psi_{\alpha}^* \ln \left( \frac{\psi_{\alpha}^*}{\psi^{\infty}_{\alpha}}  \right) \dd \qvec  & = \int \psi_{\alpha}^* \left( \ln \frac{n_{\alpha}}{n_{\alpha}^{\infty}} + \ln  \frac{1}{\sqrt{\det (H_{\alpha} {\bf M}_{\alpha}})}+\frac{1}{2} \left(  -  \qvec^{T} {\bf M}_{\alpha}^{-1} \qvec +  H_{\alpha} \qvec^{T} \qvec  \right)  \right) \dd \qvec \\
    &  =  n_{\alpha} \ln \frac{n_{\alpha}}{n_{\alpha}^{\infty}} - \frac{n_{\alpha}}{2}  \ln (\det (H_{\alpha} {\bf M}_{\alpha}) + \tr (H_{\alpha} n_{\alpha} {\bf M}_{\alpha} - n_{\alpha} {\bf I}),  \\
  \end{aligned}
\end{equation*}
and
\begin{equation}\label{D_1}
  \begin{aligned}
  \int \psi_{\alpha}^* \left| \nabla_{\qvec} \left( \ln \frac{\psi_{\alpha}^*}{\psi_{\alpha}^{\infty}}  \right) \right|^2 \dd \qvec & = \int \psi_{\alpha}  \left| \nabla_{\qvec} ( - \frac{1}{2} \qvec^{T} M_{\alpha}^{-1} \qvec + \frac{1}{2} H_{\alpha} \qvec^{T} \qvec) \right|^2  \dd \qvec \\
  & = \tr ( - {\bf M}_{\alpha}^{-1} + H_{\alpha} I )^2 n_{\alpha} {\bf M}_{\alpha}. \\
  \end{aligned}
\end{equation}
Moreover, by using the fact that $k_1 \psi_A^{\infty} = k_2 (\psi_B^{\infty})^2$,  we have 
\begin{equation}\label{D_2}
  \begin{aligned}
  & \int (k_1 \psi^*_A - k_2 (\psi_B^*)^{2}) \left(\ln  \frac{ \psi_A^*}{\psi_A^{\infty}} - 2 \ln  \frac{ \psi_B^*}{\psi_B^{\infty}}  \right) \dd \qvec \\
  & = \int (k_1 \psi^*_A - k_2 (\psi_B^*)^{2})  \Biggl[ \ln \frac{n_{A}}{n_{A}^{\infty}} + \ln  \frac{1}{\sqrt{\det (H_{A} \widetilde{\bf A}})}+\frac{1}{2} \left(  -  \qvec^{T} \widetilde{\bf A}^{-1} \qvec +  H_{A} \qvec^{T} \qvec  \right)    \\
  & \quad \quad \quad \quad \quad \quad \quad \quad  - 2 (\ln \frac{n_{B}}{n_{B}^{\infty}} + \ln  \frac{1}{\sqrt{\det (H_{B} \widetilde{\bf B}})}+\frac{1}{2} \left(  -  \qvec^{T} \widetilde{\bf B}^{-1} \qvec +  H_{B} \qvec^{T} \qvec) \right) \Biggr] \\
  & =  (k_1 n_A  - \widetilde{k}_2 ({\bf B}) n_B^2) \left( \ln \left( \frac{n_A}{n_A^{\infty}} \right) -  2 \ln \left( \frac{n_B}{n_B^{\infty}}  \right) +  \ln \frac{\det H_B \widetilde{B}}{\sqrt{ \det (H_A \widetilde{A} }} \right) \\
  & \quad \quad + \frac{1}{2} \tr \left(( - \widetilde{\bf A}^{-1} + H_A {\bf I} + 2 \widetilde{\bf B}^{-1} - 2 H_B {\bf I} )(k_1 {\bf A} -  \frac{1}{2} k_2 ({\bf B}) n_B \widetilde{{\bf B} })  \right),
  \end{aligned}
\end{equation}
where the last equality follows (\ref{close_n}) and (\ref{close_M}). Using $H_A =  2 H_B$ and combining the above calculations (\eqref{F_1}, \eqref{D_1} and \eqref{D_2}), we can show the (\ref{ED_cl_1}) is exactly same to  
\begin{equation}\label{ED_cl_2}
  \begin{aligned}
  & \frac{\dd}{\dd t} \int \int \psi_A^{*} \left(\ln \left( \frac{\psi_A^{*}} {\psi^{\infty}_{A}}  \right) - 1 \right) + \psi_B^{*} \left( \ln \left( \frac{\psi_B^{*}}{\psi^{\infty}_{B}} \right) - 1  \right) \dd \qvec \dd \x \\
 &  = - \int \int  \xi_A \psi_A \left|\nabla_{\qvec} \left(\ln \left( \frac{\psi_A^*}{\psi_A^{\infty}} \right)  \right) \right|^2  + \xi_B \psi_B \left|\nabla_{\qvec}  \left(\ln \left( \frac{\psi_B^*}{\psi_B^{\infty}} \right) \right) \right|^2 \\
 & \quad \quad \quad +  (k_1 \psi^*_A - k_2 (\psi_B^*)^{2}) \ln \left( \frac{k_1 \psi_A^*}{k_2 (\psi_B^*)^2}  \right) \dd \qvec \dd \x,
  \end{aligned}
\end{equation}
which is obtained by replacing $\psi_{\alpha}$ by $\psi_{\alpha}^*$ in the original micro-macro energy-dissipation law (\ref{Energy_Id}). It is clear that the right-hand side of (\ref{ED_cl_2}) is nonnegative, i.e., $\triangle^{\rm CL} \geq 0$.
\end{proof}

  With proposition 3.2, it is straightforward to show that the closure model (\ref{Direct-closure}) satisfies the energy-dissipation law
  \begin{equation}\label{ED_CL_u}
\frac{\dd}{\dd t} \left( \int \frac{1}{2} \rho |\uvec|^2 \dd \x + \mathcal{F}^{\rm CL} (n_A, n_B, {\bf A}, {\bf B} \right) = - \left( \int \eta |\nabla \uvec|^2 \dd \x + \triangle^{\rm CL} \right) \leq 0 
  \end{equation}
  for $n_A > 0$, $n_B > 0$ and  symmetric,  positive-definite matrices ${\bf A}$ and ${\bf B}$.
  However, it is not straightforward to derive the equation (\ref{Direct-closure}) from the the energy-dissipation law (\ref{ED_CL_u}). Moreover, due to presence of the reaction procedure, the dynamics (\ref{Direct-closure}) no longer lies on the quasi-equilibrium manifold $\mathcal{M}^*$.
  Indeed, the maximum entropy closure only use the information of the free energy part of the original system, it is unclear whether it is suitable for the dissipation part.
  As discussed in the next section, the closure model (\ref{Direct-closure}) fails to procedure a non-monotonic curve of the shear stress versus the applied shear rate in steady homogeneous flows. Such a closure approximation may only valid when the elastic part reaches its equilibrium much faster then the reaction part in the original system, i.e., the solution will move to $\mathcal{M}^*$ rapidly \cite{gorban2001corrections}. Unfortunately, in a high shear rate region, the macroscopic flow prevents the elastic part to reach its equilibrium.







\subsection{The moment closure model: closure-then-variation} 

To obtain a thermodynamically consistent macroscopic model that suitable for the high shear rate region, 
we consider a different closure approximation procedure, known as \emph{closure-then-variation}. The idea is to apply the closure approximation to the energy dissipation law first, and derive the closure system by applying the energetic variational approach in the coarse-grained level. This approach is similar to the Onsager principle based dynamic coarse graining method proposed in \cite{doi2016principle}.
By imposing a proper dissipation mechanism on the quasi-equilibrium manifold $\mathcal{M}^*$, we can have a thermodynamically consistent closure model for both mechanical and chemical part of the system.

On the quasi-equilibrium manifold $\mathcal{M}^*$, we have ${\bf A} = n_A \widetilde{\bf A}$ and ${\bf B} = n_B \widetilde{\bf B}$.
So the free energy $\mathcal{F}^{\rm CL} (n_A, n_B, {\bf A}, {\bf B} )$ for the closure system, defined in (\ref{E_closure}), can be reformulated in terms of number density $n_A$ and $n_B$, and the conformation tensor of two species $\widetilde{\bf A}$ and $\widetilde{\bf B}$, given by
\begin{equation}\label{E_closure_new}
  \begin{aligned}
 & \widetilde{\mathcal{F}}^{\rm CL} (n_A, n_B, \widetilde{\bf A}, \widetilde{\bf B} )  = \int n_A \left( \ln \left( \frac{n_A}{n_A^{\infty}}  \right) - 1 \right) + n_B \left( \ln \left( \frac{n_B}{n_B^{\infty}}  \right) - 1 \right) \\
 & \quad + \frac{n_A}{2} \left( - \ln \det \left( H_A \widetilde{{\bf A}} \right) + \tr \left( H_A \widetilde{\bf A} - {\bf I} \right)   \right) + \frac{n_B}{2} \left( - \ln \det \left( H_B \widetilde{{\bf B}} \right) + \tr \left( H_B \widetilde{{\bf B}} - {\bf I} \right) \right).  
  \end{aligned}
\end{equation}
We can impose the kinematics for the number density to account for the macroscopic breakage and combination procedure:
\begin{equation}
  \begin{cases}
& \pp_t n_A + \nabla \cdot (n_A \uvec) = - \pp_t R^n \\
& \pp_t n_B + \nabla \cdot (n_B \uvec) = 2 \pp_t R^n. \\
  \end{cases}
\end{equation}
where $R^n$ is the macroscopic reaction trajectory.

The dissipation of the macroscopic moment closure system on $\mathcal{M}^*$ consists of three parts: the viscosity of the macroscopic flow, the evolution of the conformation tensors and the reaction on the number density, which can be formulated as
\begin{equation}\label{Dissipation_on_manifold}
  \begin{aligned}
  \widetilde{\triangle}^* = \int &  \eta |\nabla \uvec|^2 
  + \tr \left( {\sf M}_A  \left(\frac{\dd \widetilde{\bf A}}{\dd t}\right)^2\right) + \tr \left( {\sf M}_B \left( \frac{\dd \widetilde{\bf B}}{\dd t} \right)^2 \right) ~ \dd \x + \widetilde{D}_{\rm chem} (R^n, \pp_t R^n) 
  \end{aligned} 
  \end{equation}
  where $\frac{\dd \bullet}{\dd t} = \pp_t \bullet + (\uvec \cdot \nabla) \bullet- (\nabla \uvec) \bullet - \bullet (\nabla \uvec)^{\rm T}$ is the kinematic transport of the conformation tensor \cite{lin2005hydrodynamics}, ${\sf M}_A(n_A, \widetilde{\bf A})$ and  ${\sf M}_B(n_B, \widetilde{\bf B})$ are mobility matrices. $\widetilde{D}_{\rm chem} (R^n, \pp_t R^n)$, defined by 
\begin{equation}
  \widetilde{D}_{\rm chem} (R^n, \pp_t R^n)  = \pp_t R^n \ln \left(  \eta^n( R^n) \pp_t R^n + 1  \right).
  \end{equation}
  is the dissipation for breakage and combination process at the macroscopic scale.
  The choice of $\eta^n(R^n)$ determines the macroscopic reaction rate in the closure system. One can view (\ref{Dissipation_on_manifold}) as a projection of the original dissipation on the quasi-equilibrium manifold $\mathcal{M}^*$. We then apply the energetic variational approach to obtain the dynamics on $\mathcal{M}^*$, i.e, the moment closure system.

\noindent {\bf The chemical reaction on the number density}: By performing energetic variational approach with respect to $R^n$ and $\pp_t R^n$, we obtain
\begin{equation}
  \ln \left(  \eta_n( R^n) \pp_t R^n + 1  \right) = - \frac{\delta \widetilde{\mathcal{F}}^{\rm CL}}{\delta R^n} = \mu_A^n -  2 \mu_B^n.
\end{equation}
For the closure energy  $\widetilde{\mathcal{F}}^{\rm CL}[n_A, n_B, \widetilde{\bf A}, \widetilde{\bf B}]$, we can compute the corresponding chemical potential of number density $n_A$ and $n_B$ as
\begin{equation}\label{mu_nAB}
  \begin{aligned}
    & \mu_A^n = \ln n_A - \ln n_A^{\infty} - \frac{1}{2} \ln \left( \det \left( H_A \widetilde{\bf A}  \right) \right)  + \frac{1}{2} \tr \left( H_A \widetilde{\bf A} - {\bf I} \right), \\
    & \mu_B^n = \ln n_B - \ln n_B^{\infty} - \frac{1}{2} \ln \left( \det \left( H_B \widetilde{\bf B} \right) \right)  + \frac{1}{2} \tr \left( H_B \widetilde{\bf B}- {\bf I} \right), \\
  \end{aligned}
\end{equation}
which is same to the generalized chemical potential defined in \cite{germann2013nonequilibrium}.

At the chemical equilibrium for given $\widetilde{\bf A}$ and $\widetilde{\bf B}$, $\mu_A^n = 2 \mu_B^n$, we have
\begin{equation}\label{Keq_ness}
K_{eq}^{\rm neq} = \frac{n_A^{\rm neq}}{ (n_B^{\rm neq})} = \frac{n_A^{\infty} \exp(- \frac{1}{2} \tr(\bm{\tau}_A/n_A)) \sqrt{\det \left( H_A \widetilde{\bf A}  \right)} }{ (n_B^{\infty})^2  \exp(- \tr(\bm{\tau}_B / n_B)) \det \left( H_B \widetilde{\bf B}  \right)},
\end{equation}
where 
\begin{equation}
\bm{\tau}_A = H_A {\bf A} - n_A {\bf I}, \quad \bm{\tau_B} =  H_B {\bf B} - n_B {\bf I}
\end{equation}
is the induced stress tensor associated with species $A$ and $B$ respectively.
Following \cite{germann2013nonequilibrium}, we take 
\begin{equation}
1/\eta^n(R^n) = \widetilde{k}_2  \exp( \tr (\bm{\tau}_B) / n_B) / \det (  H_B \widetilde{\bf B} ) n_B^2,
\end{equation}
which gives
\begin{equation}\label{Def_kneq}
  k_1^{\rm neq} = k_1^{eq} \frac{\exp( \frac{1}{2} \tr (\bm{\tau}_A / n_A) ) }{\sqrt{ \det (  H_A \widetilde{\bf A} )}} , \quad k_2^{\rm neq} = \widetilde{k}_2^{eq}  \frac{ \exp( \tr (\bm{\tau}_B / n_B)  }{ \det (  H_B \widetilde{\bf B} ) }.
\end{equation}
Thus, the number densities satisfy 
\begin{equation}\label{eq_n}
  \begin{aligned}
  & \pp_t n_{A} + \nabla \cdot (n_A \uvec ) = - k_1^{\rm neq} n_A + k_2^{\rm neq} n_B^2, \\
  & \pp_t n_{B} + \nabla \cdot (n_B \uvec ) = 2 k_1^{\rm neq} n_A -  2 k_2^{\rm neq} n_B^2, \\
  \end{aligned}
\end{equation}
The resulting non-equilibrium reaction rates of number densities is exact same to those in the GCB model \cite{germann2014investigation, germann2013nonequilibrium}.

\noindent {\bf Gradient flows with convection on conformation tensors}: The evolution of conformation tensor can be obtained by performing energetic variational approach in terms of ${\bf A}$ (${\bf B}$) and $\frac{\dd {\bf A}}{\dd t}$ ($\frac{\dd {\bf B}}{\dd t}$) \cite{Giga2017,zhou2018dynamics}, which result in  
\begin{equation}
  \begin{cases}
    & {\sf M}_A ( \pp_t \widetilde{\bf A} + (\uvec \cdot \nabla) \widetilde{\bf A} - (\nabla \uvec) \widetilde{\bf A} - \widetilde{\bf A} (\nabla \uvec)^{\rm T} ) = -   \dfrac{\delta \mathcal{F}^{\rm CL}}{\delta {\bf A}} \\
    & \\
    & {\sf M}_B (\pp_t \widetilde{\bf B} + (\uvec \cdot \nabla) \widetilde{\bf B} - (\nabla \uvec) \widetilde{\bf B} - \widetilde{\bf B} (\nabla \uvec)^{\rm T} ) = -   \dfrac{\delta \mathcal{F}^{\rm CL}}{\delta {\bf B}}. \\
  \end{cases}
\end{equation}
By taking ${\sf M}_A =  n_A {\widetilde{\bf A}^{-1}} / 4 \xi_A$ and ${\sf M}_B =  n_B {\widetilde{\bf B}^{-1}} / 4 \xi_B$, we have 
  \begin{equation}\label{nA_A}
    \begin{aligned}
      &  \pp_t \widetilde{\bf A} + (\uvec \cdot \nabla) \widetilde{\bf A} - (\nabla \uvec) \widetilde{\bf A} - \widetilde{\bf A} (\nabla \uvec)^{\rm T} = \xi_A (2 {\bf I} - 2 H_A \widetilde{\bf A}) \\
      &  \pp_t \widetilde{\bf B} + (\uvec \cdot \nabla) \widetilde{\bf B} - (\nabla \uvec) \widetilde{\bf B} - \widetilde{\bf B} (\nabla \uvec)^{\rm T} = \xi_B (2 {\bf I} - 2 H_B\widetilde{\bf B}). \\
    \end{aligned}
  \end{equation}

\noindent {\bf Macroscopic flow equation:} Now we compute the macroscopic flow equation by performing the energetic variational approach with respect to the flow map $\x(\X, t)$. When writing the macroscopic force balance, we should assume that the number densities and the conformation tensors to be purely transported with flow.
Under the incompressible condition ($\det \F =  1$), we have the kinematics \cite{lin2005hydrodynamics}
\begin{equation}
  \widetilde{\bf A} =  \F \widetilde{\bf A}_0 \F^{\rm T}, \quad \widetilde{\bf B} =  \F \widetilde{\bf B}_0 \F^{\rm T}, \quad n_A = n_A^0, \quad n_B  = n_B^0,
\end{equation}
and the action functional for the moment closure system can be given by
\begin{equation}
  \begin{aligned}
\widetilde{\mathcal{A}}[\x] = \int_{0}^T \int &  \frac{1}{2} \rho_0 |\x_t|^2 - \lambda \left[\frac{n_A^0}{2}  \tr \left( H_A \F \widetilde{\bf A}_0 \F^{\rm T}    \right) + \frac{n_B^0}{2}  \tr \left( H_B \F \widetilde{{\bf B}}_0 \F^{\rm T}\right) \right] \dd \X \dd \x \\
  \end{aligned}
\end{equation}
after dropping all the constant terms.
A direct computation results in
\begin{equation}
\frac{\delta \widetilde{\mathcal{A}}}{\delta \x} = - \rho (\uvec_t + \uvec \cdot \nabla \uvec) + \lambda \nabla \cdot (H_A {\bf A} + H_B {\bf B})).
\end{equation}
The only dissipation term for the macroscopic flow is the viscosity part $\mathcal{D}_{\eta} = \frac{1}{2}\int \eta |\nabla \uvec|^2 \dd \x$ \cite{Giga2017}, so the dissipative can be computed as $\frac{\delta \mathcal{D}_{\eta}}{\delta \x_t}  = - \eta \Delta \uvec$. The final macroscopic force balance can be written as
\begin{equation}\label{closure_u}
  \rho (\uvec_t + \uvec \cdot \nabla \uvec) + \nabla \tilde{p} = \eta \Delta \uvec + \lambda \nabla \cdot (H_A {\bf A} + H_B {\bf B})),
\end{equation}
where $\tilde{p}$ is a Lagrangian multiplier for the incompressible condition. 
Equation (\ref{closure_u}) is equivalent to
\begin{equation}\label{closure_u_1}
  \rho (\uvec_t + \uvec \cdot \nabla \uvec) + \nabla p = \eta \Delta \uvec + \lambda \nabla \cdot (H_A {\bf A} + H_B {\bf B} - (n_A + n_B) {\bf I}),
\end{equation}
due to the incompressible condition.

Finally, we get the the closure system
\begin{equation}\label{closure_system}
  \begin{cases}
    &  \rho (\pp_t \uvec + (\uvec \cdot \nabla) \uvec) + \nabla p = \eta \Delta \uvec + \lambda \nabla \cdot \left(  H_A n_A \widetilde{\bf A} + H_B n_B \widetilde{\bf B} - (n_A + n_B) {\bf I} \right)   \\
    & \nabla \cdot \uvec = 0 \\
    & \pp_t n_{A} + \nabla \cdot (n_A \uvec ) = - k_1^{\rm neq} n_A + k_2^{\rm neq} n_B^2, \\
    & \pp_t n_{B} + \nabla \cdot (n_B \uvec ) = 2 k_1^{\rm neq} n_A -  2 k_2^{\rm neq} n_B^2, \\
    &  \pp_t \widetilde{\bf A} + (\uvec \cdot \nabla) \widetilde{\bf A} - (\nabla \uvec) \widetilde{\bf A} - \widetilde{\bf A} (\nabla \uvec)^{\rm T} = 2 \xi_A ( {\bf I} -  H_A \widetilde{\bf A}) \\
    &  \pp_t \widetilde{\bf B} + (\uvec \cdot \nabla) \widetilde{\bf B} - (\nabla \uvec) \widetilde{\bf B} - \widetilde{\bf B} (\nabla \uvec)^{\rm T} = 2 \xi_B ( {\bf I} -  H_B\widetilde{\bf B}), \\
  \end{cases}
\end{equation}
where $k_1^{\rm neq}$ and $k_2^{\rm neq}$ are defined in (\ref{Def_kneq}). 
One can view (\ref{closure_system}) as a dynamics restricted in the quasi-equilibrium manifold $\mathcal{M}^*$. Recall that ${\bf A} = n_A \widetilde{\bf A}$ and ${\bf B} = n_B \widetilde{\bf B}$ on $\mathcal{M}^*$. Combining (\ref{nA_A}) with (\ref{eq_n}), we have the second moment equations
\begin{equation}\label{eq_SM}
  \begin{aligned}
  & \pp_t {\bf A} + (\uvec \cdot \nabla) {\bf A} - (\nabla \uvec) {\bf A} - {\bf A} (\nabla \uvec)^{\rm T} = 2 \xi_A (n_A {\bf I} - H_A {\bf A}) - k_1^{\rm neq} {\bf A} + k_2^{\rm neq} n_B^2 \widetilde{\bf A}\\
  & \pp_t {\bf B} + (\uvec \cdot \nabla) {\bf B} - (\nabla \uvec) {\bf B} - {\bf B} (\nabla \uvec)^{\rm T} =  2 \xi_B (n_B {\bf I} - H_B {\bf B}) + 2 k_1^{\rm neq} n_A \widetilde{\bf B} -  2 k_2^{\rm neq} n_B {\bf B}. \\
  \end{aligned}
\end{equation}
It is worth mentioning that the breakage and combination process actually create an active stress in the momentum equation if there exists additional mechanism to maintain the breakage and combination process away from an steady-state, as we can decompose $n_{\alpha}$ into two part, i.e., $n_{\alpha}(\x, t) = n_{\alpha}^{\infty}(\x) + n_{\alpha}^{\rm a}(\x, t)$ \cite{narayan2007long}. 
\begin{remark}
We notice that the reaction terms in the celebrate VCM \cite{vasquez2007network} and GCB models \cite{germann2013nonequilibrium} takes a different form. As mentioned in remark 2.2, the VCM model assume the microscopic reaction takes the form $k_1 \psi_A - k_2 \psi_B * \psi_B$ from (\ref{eq_SM}), which leads to the term $k_A {\bf A} - k_2 n_B {\bf B}$
in the second moment equation. The GCB model also take such a form as a starting point. 
To obtain the same form of reaction terms, one need further assume $2 \widetilde{\bf A} =  \widetilde{\bf B}$, then
\begin{equation}\label{App_AB}
\begin{aligned}
& - k_1^{\rm neq} {\bf A} + k_2^{\rm neq} n_B^2 \widetilde{\bf A} \approx - k_1^{\rm neq} {\bf A} + \frac{1}{2} k_2^{\rm neq} n_B{\bf B}, \\ 
& 2 k_1^{\rm neq} n_A \widetilde{\bf B} -  2 k_2^{\rm neq} n_B {\bf B} \approx 4 k_1^{\rm neq} {\bf A} -  2 k_2^{\rm neq} n_B {\bf B}. \\
\end{aligned}
\end{equation}
The assumption (\ref{App_AB}) is reasonable, since for given number densities $n_A$ and $n_B$, we have $\widetilde{ {\bf A}}^{eq} =\frac{1}{H_A} {\bf I}, \quad \widetilde{ {\bf B}}^{eq} =\frac{1}{H_B} {\bf I}$, which implies that $2 \widetilde{ {\bf A}}^{eq} =  \widetilde{ {\bf B}}^{eq}$ at the local equilibrium. So $2 \widetilde{\bf A} \approx  \widetilde{\bf B}$ is valid at least near the local equilibrium.
Under the approximation (\ref{App_AB}), we can reach a closure model
\begin{equation}\label{closure_system_1}
  \begin{cases}
    &\rho (\pp_t \uvec + (\uvec \cdot \nabla) \uvec) + \nabla p = \eta \Delta \uvec + \lambda \nabla \cdot \left(  H_A {\bf A} + H_B {\bf B} - (n_A + n_B) {\bf I} \right)   \\
    & \nabla \cdot \uvec = 0 \\
    &\pp_t n_{A} + \nabla \cdot (n_A \uvec ) = - k_1^{\rm neq} n_A + k_2^{\rm neq} n_B^2, \\
    &\pp_t n_{B} + \nabla \cdot (n_B \uvec ) = 2 k_1^{\rm neq} n_A -  2 k_2^{\rm neq} n_B^2, \\
    &\pp_t {\bf A} + (\uvec\cdot\nabla) {\bf A} - (\nabla \uvec) {\bf A} - {\bf A} (\nabla \uvec)^{\rm T} = 2 \xi_A (n_A {\bf I} - H_A {\bf A})  - k_1^{\rm neq} {\bf A} + \frac{1}{2}k_2^{\rm neq}  n_B {\bf B} \\
    &\pp_t {\bf B} + (\uvec\cdot\nabla) {\bf B} - (\nabla \uvec) {\bf B} - {\bf B} (\nabla \uvec)^{\rm T} =  2 \xi_B (n_B {\bf I} - H_B {\bf B}) + 4 k_1^{\rm neq} {\bf A}  - 2 k_2^{\rm neq} n_B {\bf B},  \\
  \end{cases}
\end{equation}
which has the same form of the VCM and GCB models. 
Although the dynamics (\ref{closure_system_1}) no longer lies on the quasi-equilibrium, it can produce more reasonable shear-stress curve at in a high shearing rate region. Compare with (\ref{closure_system}), (\ref{closure_system_1}) can force $\big|2 \widetilde{\bf A} - \widetilde{\bf B}\big|$ to be small due to the  approximation (\ref{App_AB}). We will compare these two models in details in the future work.
\end{remark}

\begin{remark}
  In the above derivation, we assume the second moments can be written as the multiplication of number density and the conformation tensor, i.e. ${\bf A} = n_A \widetilde{\bf A}$ and ${\bf B} = n_B \widetilde{\bf B}$. Such a decomposition is valid on the submanifold formed by the quasi-equilibrium states, but may not true in general. A different moment closure system can be obtained if one treat the number densities and the second moments to be independent.  Then the free energy of the closure system (\ref{E_closure}) can be written as
  \begin{equation}\label{E_closure1}
    \begin{aligned}
   \mathcal{F}^{\rm CL}(n_A, n_B, {\bf A}, {\bf B}) = \int &  n_A \left( \ln \left( \frac{n_A}{n_A^{\infty}}  \right) - \frac{1}{2} \det {H_A \bf A}  - 1 \right) + n_B \left( \ln \left( \frac{n_B}{n_B^{\infty}} \right) - 1 - \frac{1}{2} \det (H_B {\bf B}) \right) \\
   & + \frac{d}{2} (n_A \ln n_A - n_A + n_B \ln n_B - n_B) + \frac{1}{2} \tr(H_A {\bf A}) + \frac{1}{2} \tr (H_B {\bf B}), \\   
    \end{aligned}
  \end{equation}
  which implies that
  \begin{equation}
    \mu_A^n  = \ln n_A - \ln n_A^{\infty} - \frac{1}{2} \det (H_A {\bf A} / n_A), \quad  \mu_B^n  = \ln n_B - \ln n_B^{\infty} - \frac{1}{2} \det (H_B {\bf B} / n_B).
  \end{equation}
  We can simply modify the reaction rate in (\ref{Direct-closure}) by
  \begin{equation}
    k_1^{\rm neq} = k_1^{eq} / \det(H_A {\bf A} / n_A), \quad k_2^{\rm neq} = k_2^{eq} / \det (H_B {\bf B} / n_B)
  \end{equation}
  to obtain another closure model. We'll explore this in the future.
\end{remark}

\begin{remark}
  It is worth mentioning that the derivation in this section can be viewed as a pure macroscopic approach to model wormlike micellar solutions in the framework of EnVarA, which starts with the free energy $\mathcal{F}[n_A, n_B, \widetilde{\bf A}, \widetilde{\bf B}]$ and the dissipation given by (\ref{E_closure_new}) and (\ref{Dissipation_on_manifold}) respectively. As a pure macroscopic approach, it is not necessary to assume $H_A = 2 H_B$. If we assume $H_B = 2 H_A$, and adopt the approximation $ {{\bf A}} / n_A \approx 2 {{\bf B}} / n_B$,
  then we have
  \begin{equation}
    \begin{aligned}
    & - k_1^{\rm neq} {\bf A} + k_2^{\rm neq} n_B^2 \widetilde{\bf A} \approx - k_1^{\rm neq} {\bf A} + 2 k_2^{\rm neq} n_B{\bf B}, \\ 
    & 2 k_1^{\rm neq} n_A \widetilde{\bf B} -  2 k_2^{\rm neq} n_B {\bf B} \approx k_1^{\rm neq} {\bf A} -  2 k_2^{\rm neq} n_B {\bf B}, \\
    \end{aligned}
  \end{equation}
  which is exactly same to those in the GCB model \cite{germann2013nonequilibrium,germann2014investigation}. 
\end{remark}

\section{Numerics}
In this section, we discuss the prediction of the above moment closure models through a few toy examples. 
Detailed numerical studies for the original micro-macro model and the closure models will be carried out in future work.

\subsection{Steady homogeneous shear flow}

First we consider a steady homogeneous shear flow with the velocity field given by $$\uvec = (\kappa y, 0),$$ where $\kappa$ is the constant shear rate.  Moreover, we assume that the number density of each species is spatial homogeneous. So the original PDE system is reduced to an ODE system of $n_{\alpha}$, ${\bf A}$ and ${\bf B}$. We solve the ODE system by the standard explicit Euler scheme. 
We take the initial condition as
\begin{equation}
n_A^0 = 1, \quad n_B^0 = 2.5, \quad {\bf A}_0 = \frac{n_A^0}{H_A} {\bf I}, \quad {\bf B}_{0} = \frac{n_B^0}{H_B} {\bf I}.
\end{equation}
For each $\kappa$, we compute the number density of each species, 
and the induced shear stress $ \tau_{12} = H_A A_{12} + H_B B_{12}$. We compare the predictions for three models (\ref{Direct-closure}), (\ref{closure_system}) and (\ref{closure_system_1}).
The terminal  criterion for the numerical calculation is $T = 2$ or $n_A < 10^{-5}$.
\begin{figure}[!h]
  \centering

\includegraphics[width = \linewidth]{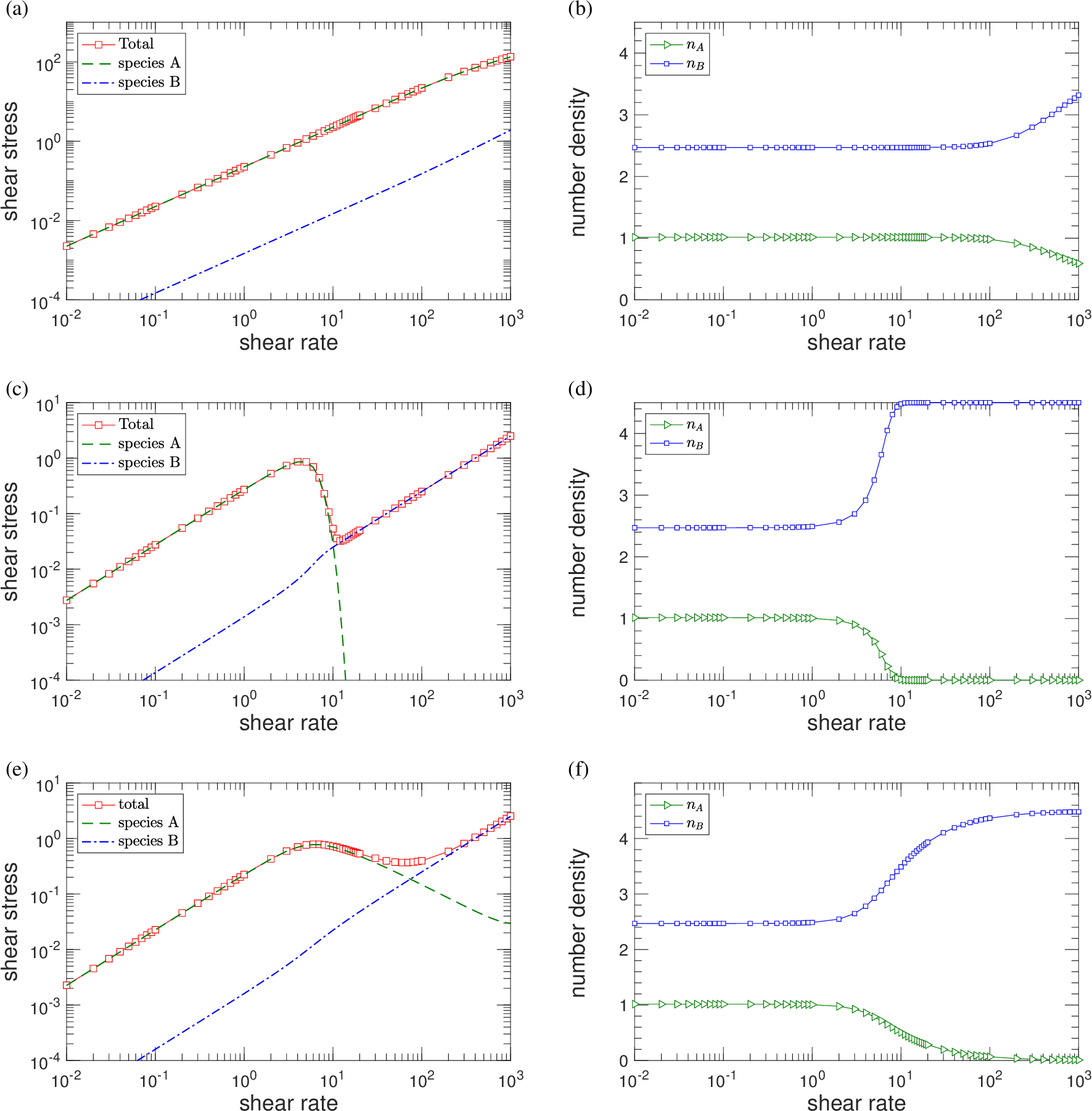}


\caption{Calculated shear stress and species number density as a function of $\kappa$ ($k_1^{eq} =0.9$, $\widetilde{k}_2^{eq} = 0.15$, $\xi_A = 0.9$, $\xi_A/\xi_B = 10^{-3}$, $H_A = 2$, $H_B = 1$). (a)-(b) Model (\ref{Direct-closure}); (c)-(d) Model (\ref{closure_system}); (e)-(f) Model (\ref{closure_system_1}). }\label{shear_stress}
 \end{figure}
Fig. \ref{shear_stress} showed the calculated shear stress and the number densities of two species as a function of $\kappa$ for three models ($k_1^{eq} =0.9$, $\widetilde{k}_2^{eq} = 0.15$, $\xi_A = 0.9$, $\xi_A/\xi_B = 10^{-3}$, $H_A = 2$, $H_B = 1$). 
At small shear rates, all three models can produce similar results, due to the fact that $\widetilde{\bf A}$ and $\widetilde{\bf B}$ are close to their equilibria. The closure model (\ref{Direct-closure}), obtained by applying the maximum entropy closure to the equation directly, fails to obtain a non-monotonic shear-stress curve. The main reason might be the fact that the break rate $k_1$ is independent with the shear rate in this model, which can not lead to a pronounced breakage of species $A$. The predictions of model (\ref{closure_system}) and (\ref{closure_system_1}) are also different in the high shear rate region. The model (\ref{closure_system}) leads to a rapidly breakage of species $A$ (Fig. \ref{shear_stress} (c) - (d)), which does not seem to match previous experimental and simulation results \cite{germann2013nonequilibrium}. 
The curves produced by the model \ref{closure_system_1}, shown in Fig. \ref{shear_stress}(e)-(f), is consistent with the results by the VCM and GCB models qualitatively \cite{germann2013nonequilibrium}. 
As mentioned earlier, the approximation (\ref{App_AB}) can be viewed as an implicit regularization term such that $|2 \widetilde{\bf A} - \widetilde{\bf B}|$ to be small, which prevent $\widetilde{A}_{12}$ to be too large.
This simple numerical test shows the importance of choosing a proper dissipation in the course-grained level in order to capture the non-equilibrium rheological properties of wormlike micellar solutions. A detailed comparison of different closure models will be made in future work.

\subsection{Transient behavior in a planar shear flow}

In this subsection, we investigate the transient behavior of the model in a planar shear flow for the closure model (\ref{closure_system_1}). Let $\uvec = (u(y), 0)$ and $u(y)$ satisfies
\begin{equation}
  \begin{cases}
    & u_t  = \eta \pp_{yy} u + \lambda \pp_y (H_A A_{12} + H_B B_{12}), \\
    & u(l) = \kappa(t), \quad u(0) = 0, \\
  \end{cases}
\end{equation}
we take $\kappa(t) = \gamma \tanh(a t)$, where $a$ is a parameter control how the wall velocity approaches steady-state \cite{zhou2012multiple, germann2014investigation}. Other parameters are set as: $l = 0.1$, $H_A = 2$, $H_B = 1$, $\xi_A = 0.9$, $\xi_A/\xi_B = 10^{-3}$, $\eta = 1$, $\lambda = 1$, $k_1 = 1$ and $\widetilde{k}_2^{\rm eq} = 6.25$. The numerical setup is close to the case considered in \cite{germann2014investigation}, but we consider the Couette flow between two surface instead of Taylor-Couette flow in the gap between two rotating cylinders for simplicity. We fix $\gamma = 50$ through this subsection.

\begin{figure}[!htbp]

\includegraphics[width = \linewidth]{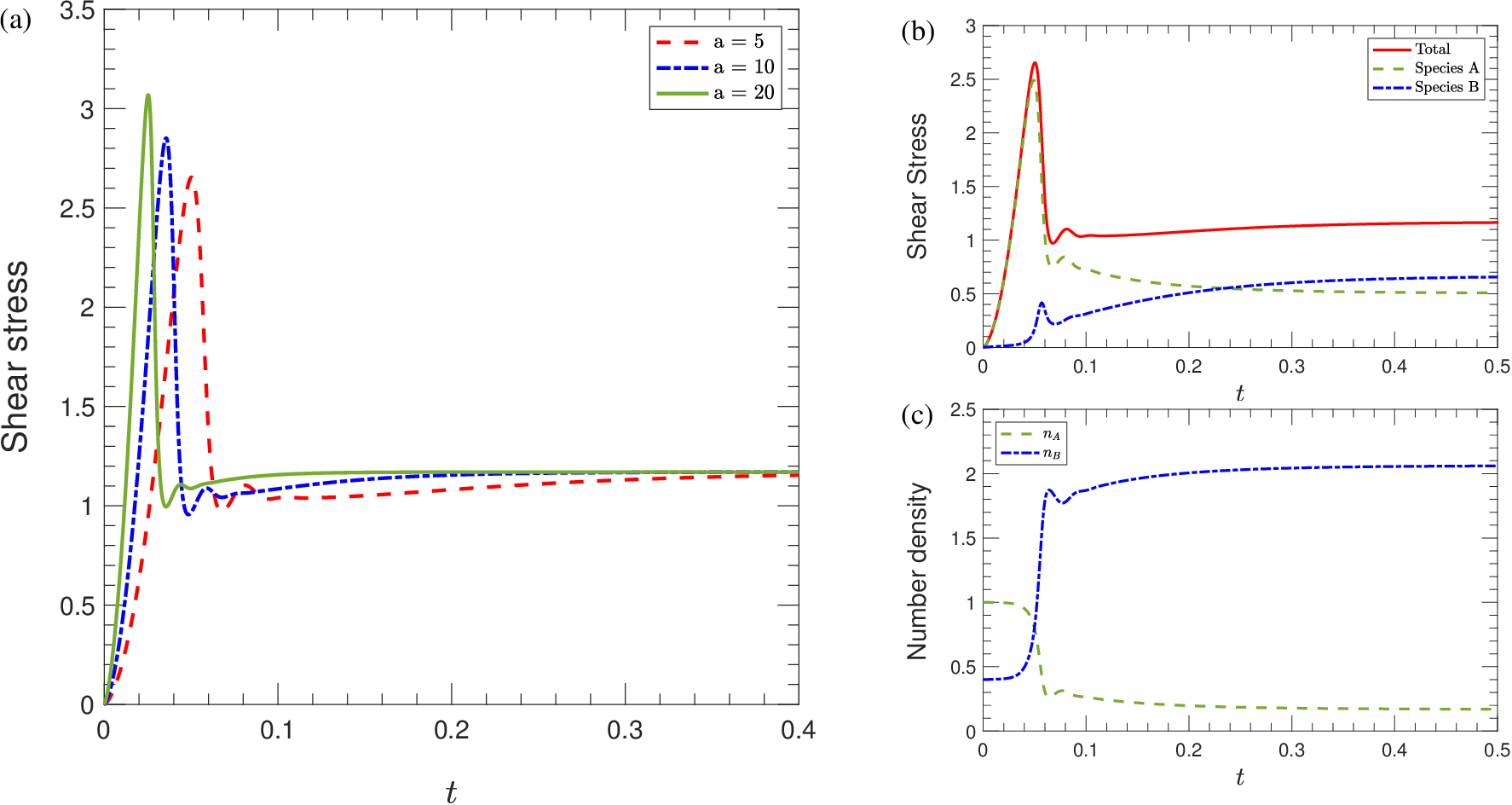}
  \caption{Transient behavior of the closure model (\ref{closure_system_1}) in a planar Couette flow: (a) Calculated shear stress at moving wall for different ramp-up rate $a$. (b) The wall shear stress as a function of $t$, (c) Species number density at the wall.}\label{Transient}
\end{figure}

Fig. \ref{Transient}(a) shows the transient response of the wall shear stress tensor for different ramp-up rates $a$. In all three cases, the shear stress will reach its maximum during the ramp-up process. Different ramp-up rates do not significantly affect the steady-state.
Fig. \ref{Transient} (b) shows temporal evolution of the total stress at the moving surface for $\kappa(t) = 50 \tanh (5t)$, the individual contributions of species of $A$ and $B$ are represented by dashed and dash-dotted lines. The number densities of species $A$ and $B$ are plotted  in Fig. \ref{Transient} (c). The above results are qualitatively agree with rheological characteristics predicted by the GCB model in circular Taylor-Couette flow (see Fig. 4 and Fig. 5 in \cite{germann2014investigation}).

\section{Summary}
In this paper, inspired by the celebrated VCM type models \cite{vasquez2007network,germann2013nonequilibrium}, we derive a thermodynamically consistent two-species micro-macro model of wormlike micellar solutions by employing an energetic variational approach. Our model incorporates a breakage and combination process of polymer chains into a classical micro-macro dumbbell model for polymeric fluids in a unified variational framework. 
The energetic variational formulation for the micro-macro model opens a new door for both numerical studies and theoretical analysis \cite{Liu-Wang-Zhang_WLMs}.  
The modeling approach also provides a framework to integrate other mechanism, and can be applied to other chemo-mechanical systems beyond the wormlike micellar solutions, such as active soft matter systems \cite{cifre2003brownian, narayan2007long, prost2015active, tiribocchi2015active, yang2016hydrodynamic}. %


We also study the maximum entropy closure approximation to the micro-macro model of wormlike micellar solutions. The maximum entropy closure links the micro-macro model with the VCM-type macroscopic model \cite{vasquez2007network, grmela2010mesoscopic, germann2013nonequilibrium}. 
We compare closure approximations by both ``variation-then-closure'' and ``closure-then-variation'' approaches. 
We show that these two approaches result in different closure models due to presence of the chemical reaction.
Since maximum entropy closure only uses the information from the free energy part of the original system \cite{gorban2001corrections}, applying the closure approximation on the PDE level cannot guarantee the thermodynamical consistency.
By a ``closure-then-variation'' approach, we can restrict the dynamics on the coarse-grained manifold by choosing the dissipation properly. As a consequence, the closure system preserves  
the thermodynamical structures of the original system for both chemical and mechanical parts. Several numerical examples show that the closure model, obtained by ``closure-then-variation'' can capture the key rheological features of wormlike micellar solution. The variational structures of models in both levels are crucial for the stability of whole system and the accuracy of structure-preserving numerical simulations \cite{liu2020lagrangian, liu2020variational, vermeeren2019contact}.
A detailed numerical study for our models will be carried out in future work.

\section*{Acknowledgement}
Y. Wang and C. Liu are partially supported by the National Science Foundation (USA) NSF DMS-1950868 and the United States-Israel Binational Science Foundation (BSF) \#2024246. T-F. Zhang is partially supported by the National Natural Science Foundation of China No. 11871203. This work was done when T.-F. Zhang visited Illinois Institute of Technology during 2019-2020, he would like to acknowledge the sponsorship of the China Scholarship Council, under the State Scholarship Fund (No. 201906415023) and the hospitality of Department of Applied
Mathematics at Illinois Institute of Technology. The authors would like to thank Prof. Haijun Yu for suggestions and helpful discussions.

\section*{References}
\bibliographystyle{cas-model2-names}
\bibliography{MM}

\end{document}